\documentclass[manuscript,screen]{}

\AtBeginDocument{%
  }


\usepackage{amsthm}

\usepackage[linesnumbered, ruled]{algorithm2e}
\usepackage{graphicx}
\usepackage{adjustbox}
\usepackage{subcaption}
\usepackage{textcomp}
\usepackage{xcolor}
\usepackage{tabularx}
\usepackage{booktabs}
\usepackage{multirow}
\usepackage{textcomp}
\usepackage{mathtools}
 \usepackage{booktabs}

\usepackage{verbatim}
\usepackage{comment}

\usepackage{enumerate}
\usepackage{titlesec}
\usepackage{xcolor}
\usepackage{etoolbox}
\usepackage[normalem]{ulem}
\usepackage{xspace}


\newtheorem{theorem}{Theorem}

\newtheorem{lemma}[theorem]{Lemma}

\DeclarePairedDelimiter{\paren}{\lparen}{\rparen}
\newcommand{\sysname}{Baechi\xspace}
\newcommand{\pyname}{Baechi-PY\xspace}
\newcommand{\tfname}{Baechi-TF\xspace}
\newcommand{\pytorch}{PyTorch\xspace}
\newcommand{\tensorflow}{TensorFlow\xspace}

\newcommand{\wmsct}{\omega_{\textrm{m-sct}}}
\newcommand{\wsct}{\omega_{\textrm{sct}}}
\newcommand{\wopt}{\omega_{\textrm{opt}}}
\newcommand{\wmopt}{\omega_{\textrm{m-opt}}}

\newtheorem*{lemma*}{Lemma}


\newcommand{\ignore}[1]{\xspace}

\newbool{IsPrintComment}
\boolfalse{IsPrintComment}

\newbool{highlightNew}
\boolfalse{highlightNew}

\newbool{showOld}
\boolfalse{showOld}

\newbool{todoComment}
\boolfalse{todoComment}

\newbool{canAddComment}
\boolfalse{canAddComment}

\newcommand{\todo}[1]
{
    \ifbool{IsPrintComment}
    {%
        {\bf \color{red} TODO: #1}
    }{}
}

\newcommand{\canAdd}[1]
{
    \ifbool{canAddComment}
    {%
        {\bf \color{red} Can Add: #1}
    }{}
}

\newcommand{\igc}[1]
{
    \ifbool{IsPrintComment}
    {%
        {\bf \color{cyan} IG comment: #1}
    }{}
}

\newcommand{\igt}[1]
{
    \ifbool{IsPrintComment}
    {%
        {\color{magenta} IG Text: #1}
    }{#1}
}
\newcommand{\linda}[1]
{
 \ifbool{IsPrintComment}
    {%
        {\color{blue}#1}
    }{}
}
\newcommand{\lindac}[1]
{
 \ifbool{IsPrintComment}
    {%
        {\color{brown} linda comment: #1}
    }{}
}
\newcommand{\cx}[1]
{
    \ifbool{IsPrintComment}
    {%
        {\bf \color{teal} CX: #1}
    }{}
}

\newcommand{\bj}[1]
{
    \ifbool{IsPrintComment}
    {%
        {\bf \color{violet} BJ: #1}
    }{}
}

\newcommand{\cs}[1]
{
    \ifbool{IsPrintComment}
    {%
        {\bf \color{red} CS: #1}
    }{}
}

\newcommand{\cstodo}[1]
{
    \ifbool{todoComment}
    {%
        {\bf \color{red} TODO: #1}
    }{}
}

\newcommand{\xm}[1]
{
    \ifbool{IsPrintComment}
    {%
        {\bf \color{orange} XM comment: #1}
    }{}
}

\newcommand{\old}[1]
{
    \ifbool{showOld}
    {%
        {\bf \color{gray} #1}
    }{}
}

\newcommand{\newAdd}[1]
{%
    \ifbool{highlightNew}
    {%
        {\color{blue} #1}
    }{#1}
}

\newcommand{\before}[1]
{
    \ifbool{IsPrintComment}
    {%
        {\bf \color{green} BEFORE: #1}
    }{}
}

\newcommand{\bjc}[1]
{%
    {#1}%
}

\newcommand{\igr}[1]
{%
    {#1}%
}

\newcommand{\lilyYiTao}[1]
{
    \ifbool{IsPrintComment}
    {%
        {\color{orange} #1}
    }{}
}

\newcommand{\cxc}[1]
{
    \ifbool{IsPrintComment}
    {%
        {\color{teal} #1}
    }{}
}

\newcommand{\squishlist} 
{
    \begin{list}{$\bullet$}
    {
        \setlength{\itemsep}{0pt}      \setlength{\parsep}{3pt}
        \setlength{\topsep}{3pt}       \setlength{\partopsep}{0pt}
        \setlength{\leftmargin}{1.5em} \setlength{\labelwidth}{1em}
        \setlength{\labelsep}{0.5em}
    }
}

\newcommand{\squishend}
{
    \end{list}
}

\makeatletter
\def\blfootnote{\xdef\@thefnmark{}\@footnotetext}
\makeatother

\newcommand{\sctThreshold}{0.1}


\definecolor{light-gray}{gray}{0.95}
\newcommand{\code}[1]{
{\texttt{#1}}}

\titleformat*{\section}{\LARGE\bfseries}
\titleformat*{\subsection}{\Large\bfseries}
\titleformat*{\subsubsection}{\large\bfseries}

\begin{document}

\title{\sysname: Fast Device Placement of Machine Learning Graphs}


\author{Beomyeol~Jeon}
\affiliation{%
  \institution{University~of~Illinois~at~Urbana-Champaign}
  \city{Urbana}
  \country{USA}}
\email{bj2@illinois.edu}

\author{Linda~Cai}
\affiliation{%
  \institution{Princeton~University}
  \country{USA}}
\email{tcai@princeton.edu}

\author{Chirag~Shetty}
\affiliation{%
  \institution{University~of~Illinois~at~Urbana-Champaign}
  \country{USA}}
\email{cshetty2@illinois.edu}

\author{Pallavi~Srivastava}
\affiliation{%
  \institution{University~of~Illinois~at~Urbana-Champaign*}
  \country{USA}}

\author{Jintao~Jiang}
\affiliation{%
  \institution{University~of~Illinois~at~Urbana-Champaign*}
  \country{USA}}

\author{Xiaolan~Ke}
\affiliation{%
  \institution{University~of~Illinois~at~Urbana-Champaign*}
  \country{USA}}

\author{Yitao~Meng}
\affiliation{%
  \institution{University~of~Illinois~at~Urbana-Champaign*}
  \country{USA}}

\author{Cong~Xie}
\affiliation{%
  \institution{University~of~Illinois~at~Urbana-Champaign*}
  \country{USA}}

\author{Indranil~Gupta}
\affiliation{%
  \institution{University~of~Illinois~at~Urbana-Champaign}
  \country{USA}}
\email{indy@illinois.edu}

\blfootnote{This submission is an extended version of \textbf{"Baechi: Fast Device Placement of Machine Learning Graphs - Beomyeol Jeon, Linda Cai, Pallavi Srivastava, Jintao Jiang, Xiaolan Ke, Yitao Meng, Cong Xie, and Indranil Gupta. In Proceedings of the 11th ACM Symposium on Cloud Computing (Virtual Event, USA) (SoCC ’20). Association for Computing Machinery, New York, NY, USA, 416–430. \href{https://doi.org/10.1145/3419111.3421302}{https://doi.org/10.1145/3419111.3421302}"}. Document detailing the additional contributions has been attached as a supplementary material}

\blfootnote{*Work done while the authors were at University~of~Illinois~at~Urbana-Champaign, USA}

\renewcommand{\shortauthors}{Jeon et al.}

\begin{abstract}
Machine Learning graphs (or models) can be challenging or impossible to train when either devices have limited memory, or  models are large. 
To split the model across devices,  learning-based approaches are still popular. 
While these result in model placements that train fast on data (i.e., low step times), learning-based model-parallelism  is time-consuming, taking many hours or days to create a placement plan of operators on devices.
We present the  \sysname system, the first to adopt an algorithmic approach to the placement problem for running machine learning training graphs on small clusters of memory-constrained devices. We integrate our implementation of \sysname into  two popular open-source learning  frameworks: TensorFlow and \pytorch. 
Our experimental results using GPUs 
show that: (i)   \sysname generates placement plans 
{$654\times$--$206$K $\times$} faster than  state-of-the-art learning-based approaches, and (ii)  \sysname-placed model's step (training) time is comparable to expert placements in PyTorch, and only up to 6.2\% worse  than expert placements in TensorFlow. 
We prove  mathematically that our two algorithms 
 are within a constant factor of the optimal. 
Our work shows that compared to learning-based approaches, algorithmic approaches can face  different challenges for adaptation to Machine learning systems, but also they 
offer proven bounds, and significant  performance benefits. 

\end{abstract}

\begin{CCSXML}
<ccs2012>
   <concept>
       <concept_id>10010520.10010521.10010537.10003100</concept_id>
       <concept_desc>Computer systems organization~Cloud computing</concept_desc>
       <concept_significance>500</concept_significance>
       </concept>
 </ccs2012>
\end{CCSXML}

\ccsdesc[500]{Computer systems organization~Cloud computing}

\keywords{Machine Learning Systems, Placement Algorithms, Constrained Memory, TensorFlow, PyTorch, Distributed Systems}

\maketitle

\section{Introduction} \label{introduction}

Distributed Machine Learning frameworks use more than one device in order to train large models and allow for larger training sets. 
This has led to multiple open-source systems, 
including   TensorFlow  \cite{tensorflow}, PyTorch \cite{pytorch}, MXNet \cite{distributed_ml_3}, Theano \cite{theano2016}, Caffe \cite{jia2014caffe}, CNTK \cite{seide2016cntk}, and  others~\cite{distributed_ml_1, distributed_ml_2, distributed_ml_4}. 
{Many of these systems use {\it data parallelism}, wherein each device (GPU) runs the  entire model, and multiple items are inputted and trained in parallel across devices.}

Yet, the increasing size of Machine Learning  (ML)  models and scale of training datasets is quickly outpacing available GPU memory. 
For instance the vanilla implementation of a 1000-layer deep residual network required  48 GB memory~\cite{chen2016training}, which is much larger than the amount of RAM available on a typical COTS (Commercial Off-the-Shelf) device. 
Even after further optimizations to reduce memory cost, the ML model still required 7 GB memory, making it impossible to run an entire model on a single device with limited memory, as well as  prohibitively expensive on  public clouds like 
{AWS~\cite{aws}, Google Cloud~\cite{google_cloud}, and Azure~\cite{azure}}.

At the same time, today, ML training is gravitating towards being run among small collections of {\it memory-constrained} devices. These include small groups of cheap devices like edge devices (for scenarios arising from Internet of Things and Cyberphysical systems), 
Unmanned Aerial Vehicles (UAVs or drones), and to some extent even mobile devices. For instance, real-time requirements~\cite{mahdavinejad2018machine,zeydan2016big}, privacy needs~\cite{bonawitz2017practical,bonawitz2019towards}, or budgetary constraints, necessitate training only using nearby or in-house devices with limited resources.

These two trends---increasing model graph sizes and growing prevalence of puny devices being used to train the model graph---together cause scenarios wherein a single device is insufficient and results in an Out of Memory (or OOM) exception. {For example, we found that the {Google Neural Machine Translation (GNMT)~\cite{gnmt}} model  OOMs on a 4 GB GPU even with 
conservative parameters:  batch size 128, 4 512-unit \igr{long short-term memory (LSTM)} layers, 30K vocabulary, and sequence length 50.} 

This problem is traditionally solved by adopting  {\it model parallelism}, wherein the ML model graph is split 
across multiple {devices.} 
Today, a popular way to accomplish  model parallelism in industry is to use {\it learning-based approaches} 
to generate the placement of operators on devices, most commonly by using Reinforcement Learning (RL) or variants. Significant in this space are works from Google~\cite{colocRL,hierarchicalRL} and the Placeto system~\cite{placeto}. 
A learning-based approach  learns iteratively (via RL) and adjusts the placement on the target cluster, with the goal of minimizing 
{execution time for each training step in the placed model, i.e., its {\it step time}.} 

While learning-based approaches achieve step times around those obtained by expert placements, they can unfortunately  take an inordinately long time to generate their  placement plans. 
{For instance, using one state-of-the-art learning-based approach~\cite{placeto}, 
NMT models require 94,000 steps during the learning-based placement, and even with a   conservatively low estimate of runtime per learning step of 2.63 seconds, the total placement time would come to 68.67 hours. } One might possibly apply parallelization techniques \cite{jia2018data, jia2018exploring, wang2019tofu} to the learning model being used to perform  placement, in order to speed it up, but the total incurred resource costs would stay just as high---hence, parallelization is orthogonal to our discussion. 

Such long waits are cumbersome and even  prohibitive at model development time, when the software developer needs to make many quick and ad-hoc deployments~\cite{placeto}. 
In fact, studies of analytics clusters reveal that most analytics job runs tend to be short~\cite{alibaba2019,cloudera_workload}.  
For instance, 
the step time for a typical model graph (e.g., NMT or  Inception-V3), to train on a single data batch, is O(seconds) on a typical GPU. Overwhelming this time with learning-based placement times which span hours,  significantly inhibits the developer's agility.

Additionally, a learning-based placement run works only for a target cluster and {a given model graph with fixed hyperparameters (e.g., batch size, learning rate, etc.).}
If the model graph were to be transitioned to a different cluster with different GPU specs, the learning has to be repeated all over again, incurring the high overhead. Consider a developer who is trying to find the right batch size for a target cluster. This process of exploration is iterative, and every hyperparameter value trial needs a new run of the learning-based technique, making the overall undertaking slow. 

For the model development process to be agile, nimble, and at the same time coherent with  future real deployments, what is needed is a new class of placement techniques for model parallelism, that: i) are significantly faster in placement than learning-based approaches, and yet ii) achieve fast step times in the placed model.

{
This paper is the first to adopt a traditional {\it algorithmic} approach for the placement  of ML 
models   on memory-constrained clusters. Subsequent to our initial work~\cite{baechiconf}, a few other authors have published algorithmic or dynamic programming-based ideas for model placement, however  these are either
: i) standalone and not integrated into open-source systems~\cite{amarpaper}, 
ii) or they are aimed at only placing specific models like transformers~\cite{terapipe}, \cite{megatron}, or iii)  they are at best comparable 
in performance to ours~\cite{pesto}. 
Orthogonal to model parallelism is pipeline parallelism~\cite{pipedream},\cite{gpipe},\cite{dapple},\cite{pipemare}---our paper does not explore the latter, in order to keep our discussion focused on the benefits of algorithmic approaches over learning approaches.
}

The contributions of this paper are:
\squishlist
\item We adapt
classical literature from parallel job scheduling to propose two memory-constrained algorithms, called {\it m-SCT (memory-constrained Small Communication Times)} and {\it m-ETF (memory-constrained Earliest Task First)}. We also present \igr{\it m-TOPO (memory-constrained TOPO-logical order)}, a strawman. \igr{We focus on the static version of the problem.} 
\item We prove that under certain  assumptions,  both m-ETF and m-SCT steps time is within a constant factor of the optimal. 
\item \igr{We present the {\it \sysname} system ({Korean  for {\it placement}, pronounced ``Bay-Chee''}). \sysname incorporates m-SCT/m-ETF into both TensorFlow as well as \pytorch. Our exposition focuses on the multiple design decisions that were needed in \sysname{} to derive performance out of the algorithmic underpinnings. 
}
\item 
We present tailored integration of \sysname{} with both  \tensorflow and \pytorch to address the different   programming abstractions and architecture in these two frameworks.

\item We present experimental results from a real deployment on a small cluster of GPUs, using both TensorFlow and PyTorch which show that \sysname generates placement plans in time 
{$654\times$--$206$K $\times$} faster than  today's learning-based approaches, and yet the placed model's step time (training time) is {either faster than or, at worst, only up to 6.2\% higher, compared to  expert-based placements. }

\squishend
\section{New Algorithms for Memory-Constrained Placement} \label{Methodology}

This section presents the problem formulation and our  three placement techniques. For each technique, we first discuss the classical approach (not memory-aware), and then  our adapted memory-constrained algorithm. Where applicable, we prove optimality.

Our three approaches are: 1) {a placer based on  topological sorting (TOPO)}
2) a placer based on Earliest Task First (ETF), and 3) a placer based on Small Communication Time (SCT).

\subsection{Problem Formulation}
\label{Problem formulation} 

Given $n$  devices (GPUs), each with memory size $M$, and a Machine Learning (ML) graph $G$ that is a DAG (Directed Acyclic Graph) of operators, the device placement problem is to place nodes of $G$ (operators)  on the devices so that the makespan is minimized. Makespan, equivalent to step time, is traditionally defined as the total execution time to train {on} one {input mini-batch (i.e., unit of training data).}  
Table~\ref{fig:terminology} summarizes key terms used throughout the paper. When discussing classical algorithms, we use the classical terms ``tasks'' instead of operators. 

If one assumes devices have infinite (sufficient) memory,  {\it scheduling with communication delay}  is a well-studied theoretical problem.
The problem is NP-hard even when under the simplest of assumptions \cite{complexity}, such as infinite number of {devices }
and unit times for computation and communication (UET-UCT). 

\igr{Out of the three best-performing solutions to the infinite memory problem, we choose the two that map best 
to ML graphs:   1) {\it Earliest Task First or ETF}~ \cite{algo_lit_review, etf}, and 2)  
 {\it Small Communication Time  or SCT}~\cite{sct}. SCT is provably close to optimal when the ratio of maximum communication time between any two {tasks} 
 to  minimum computation time for any {task }
 is $\leq 1$. 

We excluded a third solution, UET-UCT~\cite{uet_uct_lp}, since it assumes unit computation and communication times, but ML graphs have heterogeneous operators. 

}

\begin{table}[tb!]
\caption{
{
\it {\bf Terms and Notations.} Used in the Paper.
}
}
\centering
\begin{tabularx}{\linewidth}{lX}
 \hline
 $G$ & Machine Learning graph to be placed \newline (Classical: Dependency graph of tasks to be placed) \\
 \hline
 $m$ & Number of operators (or tasks) in $G$\\
 \hline
 $n$ & Number of devices in a cluster \\
 \hline 
 $M$ & Memory available per device  \\
 \hline
 $d_i$ & Size of memory required by operator (task) $i$\\
 \hline
 $k_i$ & Computation time of operator (task) $T_i$\\
 \hline 
 $c_{ij}$ & Communication time of the output of operator $T_i$\\
 \hline
 $\rho$ & Ratio between maximum operator-to-operator (task-to-task) communication time and minimum per-operator (per-task) computation time\\
 \hline 
 SCT assumption & Small communication time assumption: Ratio between maximum operator-to-operator (task-to-task) communication time and minimum per-operator (per-task) computation time is $\leq 1$\\
 \hline 
 makespan & Training time for one data mini-batch, i.e., runtime for executing a ML graph on one input mini-batch\\
 \hline
\end{tabularx}

\label{fig:terminology}
\end{table}

\subsection{m-TOPO: Topological Sort Placer}

\label{topo}

\noindent{\it \textbf{Background: Topological Sort (Not Memory-Aware). }}
Topological sort \cite{topo} is a linear ordering of vertices in a DAG, such that for each directed edge $u \rightarrow v$, $u$ appears before $v$ in the linear ordering. 

\smallskip\noindent{\it \textbf{New Memory-Constrained Version (m-TOPO). }}
Our modified version, called {\it m-TOPO}, works as follows. It calculates the maximum load-balanced memory that will be used per device, by dividing total required memory by number of devices, and then accounting for outlier operators. Concretely, this per-device cap is  $Cap = (\sum_{i \in [m]} d_i / n + \max_{i \in [m]} d_i)$. Then m-TOPO works iteratively, and assigns operators to devices in increasing order of device ID. 
For the current device, m-TOPO places operators until the device memory usage reaches  $Cap$. 
At that point, m-TOPO moves on to the next device ID, and so on. 
At runtime, m-TOPO executes the operators at a device in the topologically sorted order 
\subsection{m-ETF: Earliest Task First Placer} \label{sec:etf}

\noindent{\it \textbf{Background: ETF (Not Memory-Aware). }}
ETF~\cite{etf} maintains two lists: a sorted task list $T$ containing unscheduled tasks, and a device list $P$. In $T$, tasks are sorted by {\it earliest schedulable time}. The earliest schedulable time of task $i$ is the latest finish time of $i$'s parents in the DAG, plus time for their data to reach $i$. In $P$, each device is associated with its {earliest available time}, i.e., last finish time of its assigned tasks (so far).  

ETF iteratively: i) places the head of the task queue $T$ at that device from $P$ which has the earliest available time, ii) then updates the earliest available time of that device to be the completion time of the placed task, and iii) updates earliest schedulable time of that task's dependencies in queue $T$ (if applicable).

\igr{The earliest schedulable time of task $j$ on device $p$ is the later of two times: (i) device $p$'s earliest available time ($free(p))$, and (ii) all predecessor tasks of $j$ have completed {\it and} have communicated their data to $j$. More formally, let: a) $\Gamma^{-}(j)$ be the set of $j$'s predecessors; b) for $i$: start time is $s_i$, computation time is $k_i$; c) $x_{ip}=0$ when task $i$ is on device $p$, otherwise $x_{ip}=1$; 
d) commmunication time from task $i$ to $j$ is $c_{ij}$. 
Then, the earliest schedulable time of task $j$ across all devices is:
}
{
\begin{equation}
\label{eq:schedulable_time}
\displaystyle \min_{p \in P} \Big[\max\big(free(p), \max_{i \in \Gamma^{-}(j)} (s_i + k_i + c_{ij} x_{ip})\big)\Big]. 
\end{equation}
}

Under the SCT assumption (Table \ref{fig:terminology}),  
ETF's makespan was shown \cite{etf} to have an approximation ratio of  $(2 + \rho - \frac{1}{m})$ within optimal, 
where $\rho$ is the ratio of the maximum communication time to minimum computation time, and $m$ is the number of devices.
This approximation ratio approaches 3 when $\rho$ approaches 1 and {$m \gg 1$.} 

\smallskip\noindent{\it \textbf{New Memory-Constrained Version (m-ETF). }}

Our new modified algorithm, called {\it m-ETF}, maintains a queue $Q$ of operator-device pairs $(i,p)$ for all unscheduled operators and all devices. Elements $(i,p)$ in $Q$ are sorted in increasing order of the earliest schedulable time for operator $i$ on device $p$. This earliest schedulable time takes into account dependent parents of $i$ as well as the earliest time that device $p$ is available, given operators already scheduled at $p$. The reader will notice that m-ETF's modified queue can also be used for ETF--the key reason to use $(i,p)$ pairs is for m-ETF to do fast searches, since the earliest available device(s) may have insufficient memory.

m-ETF iteratively looks at the head of the queue. If the head element $(i,p)$ is not schedulable because device $p$'s leftover memory is insufficient, then the head is removed. If the head is schedulable, then operator $i$ is assigned to start on device $p$ at that earliest time, and we: i) update $p$'s earliest available time to be the completion time of $i$, and ii) update $i$'s  child operators' earliest schedulable times in queue $Q$ (if applicable).

\subsection{m-SCT: Small Communication Time Placer} \label{sec:sct_algo}

\noindent{\it \textbf{Background: SCT (Not Memory-Aware). }}
\label{sct:naive} The classical SCT algorithm \cite{sct} first develops a solution assuming an infinite number of available devices, and then specializes for a finite number of devices. 
We elaborate details, as they are relevant to our memory-constrained version.

\smallskip\noindent{\it \textbf{Classical SCT: Infinite Number of  Devices. }} SCT uses  integer linear programming (ILP). The key idea is to find the {\it favorite child} of a given task $i$, and prioritize its scheduling on the same device as task $i$. For a task $i$, denote its {\it favorite child} as $f(i)$.

\igr{The original ILP specification from \cite{sct} solves for variables $x_{ij} \in \{0,1\}$, where $x_{ij}=0$ if and only if $j$ is $i$'s favorite child. 

For completeness, we provide this full ILP specification below~\cite{sct} (Section 3.2 in that paper). 
{Below, the machine learning graph is  $G=(V,E)$; and $i,j$ refer to operators.} 
}

\begin{equation}
\left\{ \begin{array}{ll}
\min{w^{\infty}} & \mbox{Minimize makespan.}\\
\forall i \rightarrow j \in E(G),\ x_{ij} \in \{0,1\} &  x_{ij} = 0 \mbox{ when $j$ is $i$'s favorite child.}\\
\forall i \in V(G),\ s_i \geq 0 & \mbox{All tasks start after time=0.}\\
\forall i \in V(G),\ s_i + k_i \leq w^{\infty} & \mbox{All tasks should complete before makespan.}\\

\forall i \rightarrow j \in E(G),\ s_i + k_i + c_{ij}x_{ij} \leq s_j & \mbox{Given edge }i \rightarrow j \mbox{, then $j $must start after $i$ completes.}\\
& \mbox{If on different devices, communication cost should be  added.}\\

\forall i \in V(G), \displaystyle \sum_{j \in \Gamma^{+}(i)} x_{ij} \geq |\Gamma^{+}(i)|-1 & \begin{array}{@{}ll}\mbox{Every task has at most one favorite child.}\end{array}\\
\forall i \in V(G), \displaystyle \sum_{j \in \Gamma^{-}(i)} 
x_{ji}
\geq |\Gamma^{-}(i)|-1 & \begin{array}{@{}ll}
\mbox{Every task is the favorite child of at most one predecessor.}\\  
\end{array}
\end{array} \right.
\label{main_eqn}
\end{equation}

We modify the above as follows. 
We allow $x_{ij}$ to take any real value between $0$ and $1$, thus making the ILP a relaxed LP. 
This can be solved in polynomial time using the interior point method \cite{interior_point}. Then the SCT algorithm simply rounds the LP solution $x_{ij}$ to be $1$ if $x_{ij} \geq \sctThreshold$, setting it to $0$ otherwise. 
$x_{ij}$ can be used to determine the favorite child of each task: $j$ is $i$'s favorite child if and only if after rounding, {$x_{ij} = 0$.}

This infinite device algorithm's makespan was shown~\cite{sct} to achieve an approximation ratio $\frac{2 + 2\rho}{2+\rho}$ 
{
within optimal, where $\rho$ is the ratio of the maximum communication time to the minimum computation time. 
}

We note that the ILP has a meaningful LP relaxation if and only if: (i) infinite number of devices are available, and (ii) the SCT assumption is satisfied, i.e.,  the ratio of the maximum inter-task communication time to the minimum task computation time is $\leq 1$. 
Nevertheless, even if this assumption were not true for an ML graph and devices, we show later that SCT can still be attractive.

\medskip\noindent{\it \textbf{Classical SCT: Extension to Finite Number of Devices. }}
\label{algo:sct:finite}

For a finite number of devices, SCT schedules tasks similar to ETF \cite{etf}, but: a) prefers placing the favorite child of a task $i$ on the same devices as $i$ (each task has at most one favorite child, and at most one favorite parent), and b) prioritizes urgent tasks, i.e., a task that can begin right away on any device.

It was proved that SCT's makespan has an approximation ratio of $(\frac{4 + 3\rho}{2 + \rho} - \frac{2 + 2 \rho}{m (2 + \rho)})$ within optimal {\cite{sct}}, which is strictly better than ETF's (Section \ref{sec:etf}).
For instance, when $\rho$ approaches 1 and $m \gg 1$, then SCT is within $\frac{7}{3}$ of optimal while  ETF is 3 times worse than optimal.  

\medskip\noindent{\it \textbf{New Memory-Constrained Version (m-SCT). }}
\label{sct:mem_constraint}
Our proposed memory-constrained algorithm, called {\it m-SCT}, works as follows. First, m-SCT determines scheduling priority and selects devices in the same way as the  finite case SCT algorithm just described.  
Second, 
{when a device $p$ runs out of available memory, }
m-SCT excludes $p$ from future operator placements. 

In spite of the seemingly small difference, Figure~\ref{fig:msct_vs_sct} shows that m-SCT can succeed where SCT fails. 
SCT achieves a makespan of 8 time units with infinite memory but OOMs for finite memory. With finite memory, m-SCT succeeds and increases  makespan to only 9 time units.

\begin{figure}[t]
\centering
\includegraphics[width=\linewidth]{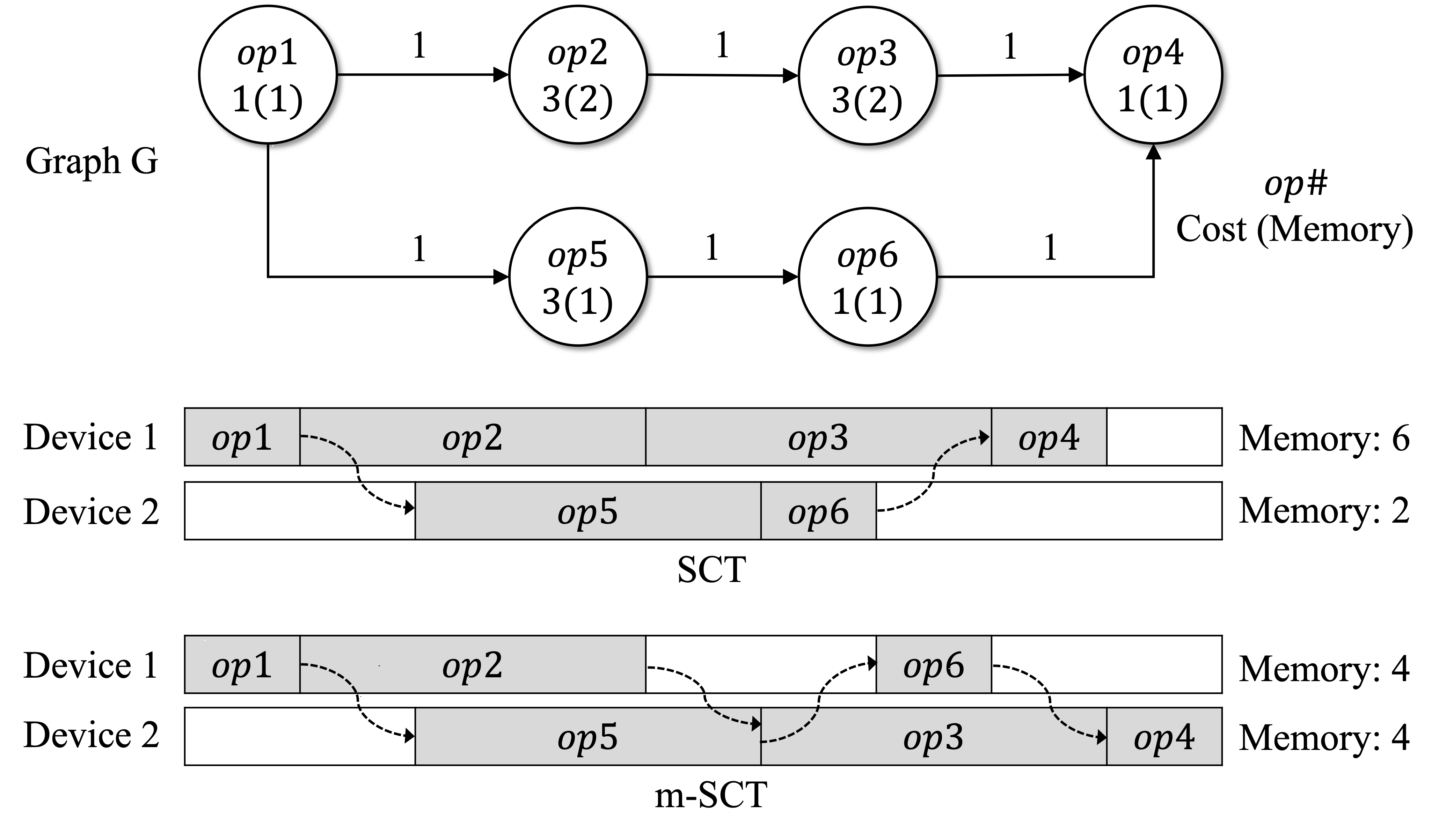}
 \caption{ {\it {\bf Classical SCT vs. m-SCT.} When per-device memory is limited to 4 memory units, SCT OOMs but m-SCT succeeds. m-SCT's training time (makespan) is only slightly higher (9) than SCT with infinite memory (8). Dashed arrows show data transfers.}}
  \label{fig:msct_vs_sct}

 \vspace*{-0.4cm}
\end{figure}

\subsection{Optimality of m-ETF and m-SCT}

Classical ETF and SCT were originally proposed to schedule a DAG of tasks on $n$ processors. The original work~\cite{sct,etf} derived upper bounds on the makespan achieved by them. Processor memory was {\it not} considered in that original formulation, and infinite memory was assumed. 

On the contrary, in ML model training, each device has a memory constraint. Impact of memory on scheduling is further pronounced due to the persistent memory that each task requires. Consequently, the schedules obtained by m-SCT/m-ETF can differ significantly from the SCT/ETF schedules (i.e.,  without memory constraint)---Fig.~\ref{fig:msct_vs_sct} shows an example. It is not clear how much worse m-SCT/m-ETF makespan is, compared to the optimal. Thus we derive upper bound on makespan of both m-ETF and m-SCT by extending the proofs in \cite{etf} and \cite{sct} to the memory constrained case. We show that m-SCT/m-ETF makespans are within a constant factor of the optimal makespan.

Because the proofs for optimality of m-ETF and m-SCT are involved, we show them in Appendix~\ref{appendix:etf} and Appendix~\ref{appendix:sct} respectively. We summarize our results and approach here: \\
{\noindent{\bf Result 1:} The completion time of m-ETF under realistic communication cost and limited memory, is within a known factor of the optimal schedule possible under zero communication cost but with infinite memory.\\
\noindent{\bf Result 2:} The completion time of m-SCT under realistic communication cost and limited memory, is within a known factor of the optimal schedule under infinite memory.\\
}
 Intuitively, our derived upper bounds are proportional to the ratio $\frac{n}{r}$,where $n$ is the total number of devices available, and $r$ is the number of devices out of $n$ that still have spare memory after all the tasks have been placed in a way that ``fills up'' memory device by device. 
 Note that $\frac{n}{r}>1$, otherwise the problem is unsolvable. Intuitively, a smaller $\frac{n}{r}$ (i.e., larger $r$) indicates looser memory constraint and thus better makespan. As $\frac{n}{r}$ approaches 1, m-SCT's solution (respectively m-ETF's) starts to approach that of SCT (respectively ETF). 

\medskip\section{\sysname Design}

\label{sec:design_main}

{This section describes how we implement \sysname{} in a way that works modularly with \tensorflow \cite{tensorflow} as well as \pytorch~\cite{pytorch}, two popular open-source learning platforms originally developed by Alphabet and Meta respectively. 

}

At a high level---for both target systems, \sysname first creates a computation graph of the input model, where each node is annotated with its memory requirements and time to complete. This graph is then fed to the chosen algorithm (Section~\ref{Methodology}'s m-SCT, m-ETF, or m-TOPO) to generate the placement. Finally, training is automatically executed with the given placement and without requiring 
the developer
to modify the code for the model.

{
However, because of two key differences in  abstractions and architectures between \tensorflow and \pytorch, \sysname's design for each is slightly different. 
} First, the ``nodes'' in the computation graph are {\it operators} in \tensorflow while in \pytorch they are {\it modules}. The former are fine-grained mathematical operations on tensors, while the latter are coarser structures similar to  classes in object-oriented languages. 
Second, in \tensorflow a model is a static graph of operators, while in  \pytorch, the  computation graph is constructed only during the forward run. Because of foreknowledge of the graph, \tensorflow can automatically insert rendezvous operators \cite{rnode} for cross-device communication. However,  \pytorch does not automatically insert these essential   cross-device communication primitives. { It requires \pytorch developers to write explicit code that moves  tensors across devices during execution.}  
A side benefit of our work is the automatic generation of these communication primitives. 

{In the remainder of this paper, we refer to the integration of \sysname into  \tensorflow as {\it \tfname}, and \sysname's integration into \pytorch as  {\it \pyname}.}

Next, we describe our techniques and optimizations for \tfname  in Section~\ref{sec:tfdesign}, and then  additional changes and differences required for \pyname design in Section~\ref{sec:pytdesign}.

\subsection{Design of \tfname 
}

\label{sec:design}
\label{sec:tfdesign}

To work with \tensorflow, \sysname{} needs to address four challenges: 

1) Satisfying TensorFlow's colocation constraints,
2) Minimizing Data Transfer via Co-Placement, 
3) Optimizations to reduce the number of operators to be placed, and 
4) Accommodating Sequential and Parallel Communications. 
\igr{\sysname solves these using a mix of both new ideas (Sections~\ref{sec:tf_colocation},\ref{sec:operator_fusion},\ref{sec:sequential_data_transfer}) and ideas similar to past work (Sections~\ref{sec:co-placement},\ref{sec:operator_fusion}).}

\medskip\noindent{\it \textbf{Working Example. }}
We use Figure~\ref{fig:example_graph} as a working example throughout this section. It is a simplified TensorFlow graph  for linear regression training with stochastic gradient descent (SGD).

\subsubsection{TensorFlow Colocation Constraints}
\label{sec:tf_colocation}

The first challenge arises from the fact that TensorFlow {(TF)} {\it requires} certain operators to be colocated. 
For instance, TensorFlow offers a  variable {operator}, \texttt{tf.Variable}, 
which is used to store persistent state such as an ML model parameter. The assignment and read operators of a variable are implemented as separate operators in TensorFlow, but need to be placed on the same device as the  variable operator.  TensorFlow represents this placement requirement as a {\it colocation group} involving all these operators. E.g., in Figure~\ref{fig:example_graph} there are two colocation groups: one containing \texttt{Weight} and \texttt{ApplyGrad}, and another containing \texttt{Step} and \texttt{UpdateStep}.

\sysname's initial placement (using the algorithms of Section~\ref{Methodology})  ignores colocation requirements. 
{Our first attempt was to {\it post-adjust  placement}, i.e., to ``adjust'' the  device placement, which was generated ignoring colocation, by ``moving'' operators from one device to another, in order to satisfy TF's colocation constraints. We explored multiple post-adjustment approaches including: i) preferring the device on which the compute-dominant operator in the group is placed,
ii) preferring the device on which the memory-dominant operator in the group is placed, and iii) preferring the device on which a majority of operators in the group are placed.  
We found all these three approaches produced   inconsistent performance gains, some 
giving step times up to  406\% worse than the expert. 
We concluded that post-adjusting was not a feasible design pathway. 
}

\igr{\sysname's novel contribution is to  {\it co-adjust placement}, using colocation constraint-based grouping  {\it while} creating the schedule. (In comparison, e.g., ColocRL~\cite{colocRL} groups {\it before} placement.) 
}
Concretely, whenever \sysname places the {\it first} operator from a given colocation group, all  other operators in that group are immediately placed on that same device. \sysname tracks the available memory on each device given its assigned operators. If the device cannot hold the entire colocation group, then \sysname moves to the algorithm's next device choice. 
We found this approach the most effective in practice, and it is thus the default setting in \sysname.

\begin{figure}
    \centering
    \begin{minipage}[t]{0.45\textwidth}
        \centering
        \caption{{\it {\bf Working Example.}  ML Graph for Linear Regression.}}
        \vspace{10pt}
        \includegraphics[width=1\linewidth]{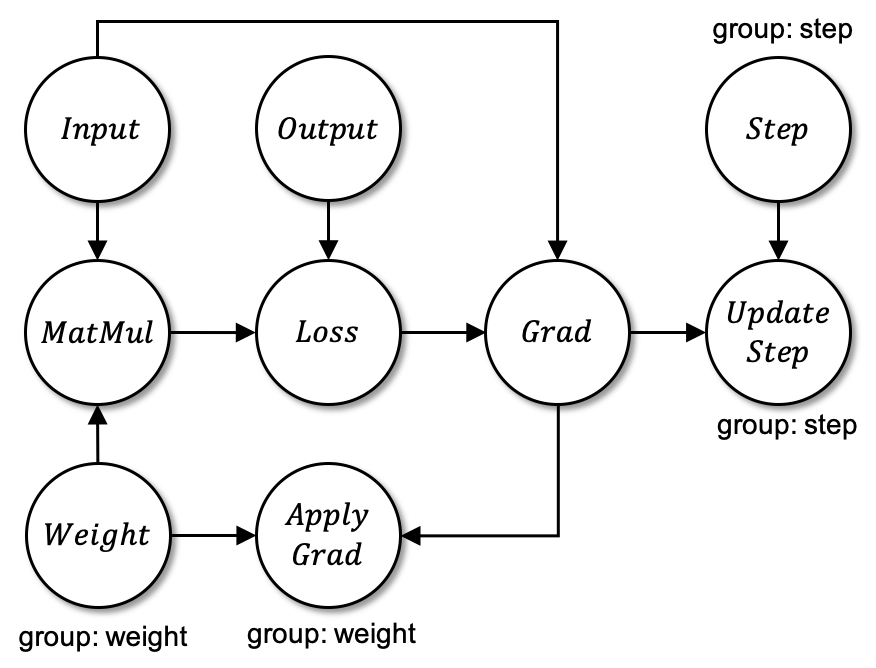}
        \label{fig:example_graph}
    \end{minipage}\hfill
    \begin{minipage}[t]{0.45\textwidth}
        \centering
        \caption{{\it {\bf Co-Placement.} Subgraph of {\tt tf.tensordot} Generating Data Transfers by m-ETF.}}
        \vspace{35pt}
        \includegraphics[width=1\linewidth]{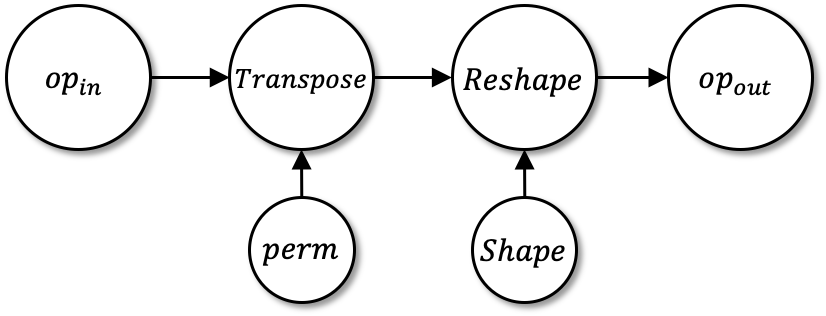}
        \label{fig:min_data_transfer}
    \end{minipage}
\end{figure}

\subsubsection{Co-Placement Optimization}
\label{sec:co-placement}

Different from TensorFlow's colocation constraints  (Section~\ref{sec:tf_colocation}), \sysname further prefers to do {\it co-placement} of certain operators. This is aimed at minimizing  data transfer overheads. Common instances include: (i) groups of communicating operators whose computation times are much shorter than their communication times, and  (ii) matched   forward and backward (gradient-calculating)  operators.

Figure~\ref{fig:min_data_transfer} shows an example for case (i). This subgraph generated by \texttt{tf.tensordot} API is a frequent pattern occurring  inside  TensorFlow graphs.
The subgraph permutes the dimensions of $\texttt{op}_{\texttt{in}}$ output according to the \texttt{perm}'s output (\texttt{Transpose}) and then changes the tensor shape by \texttt{Shape}'s output (\texttt{Reshape}). 

When m-ETF {places} this subgraph on a cluster of 3 devices, it {places} $\texttt{op}_{\texttt{in}}$, \texttt{perm}, and \texttt{Shape} on different devices. 
Computation costs for \texttt{perm} and \texttt{Shape} are very short (because they process predefined values), whereas subsequent communication times are much larger. Thus, m-ETF's initial placement results in a high  execution time.

\sysname's co-placement heuristic works as follows. If the output of an operator is only used by its next operator, we place both operators on the same device. This is akin to similar heuristics used in {ColocRL}~\cite{colocRL}. In  Figure~\ref{fig:min_data_transfer},  

{\sysname's co-placement optimization places all of the operators on one device, avoiding any data transfers among the operators.}

For case (ii), to calculate gradients in the ML model, TensorFlow generates a backward operator for each forward operator. 
\sysname co-places each backward operator on the same device as its respectively-matched forward operator.

\begin{figure*}
    \centering

    \begin{subfigure}{.47\linewidth}
        \includegraphics[width=\linewidth]{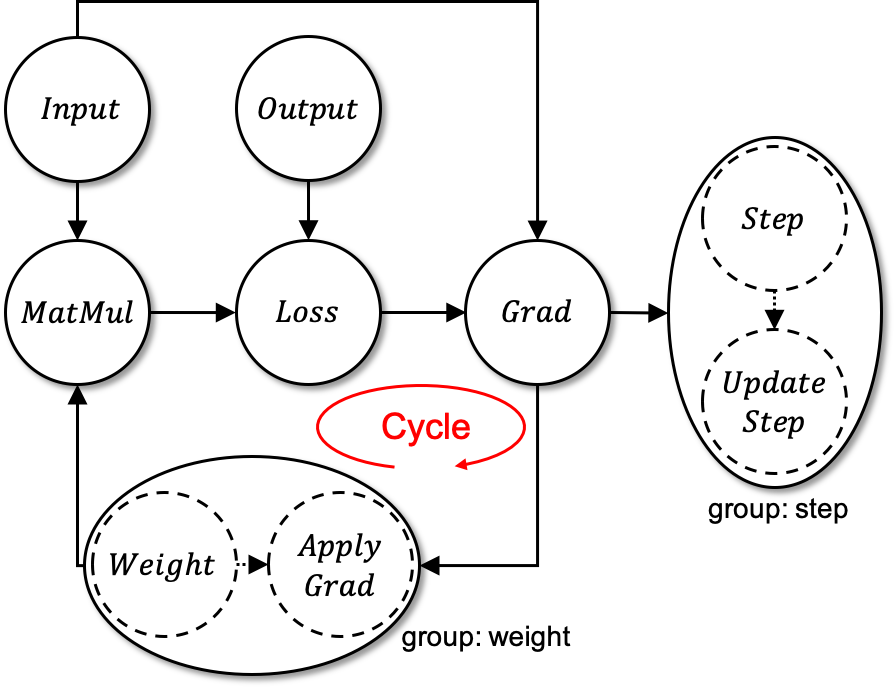}
    \caption{}
    \label{fig:fused_op_graph_cycle}
    \end{subfigure}
    \hfill
    \begin{subfigure}{.23\linewidth}
        \includegraphics[width=\linewidth]{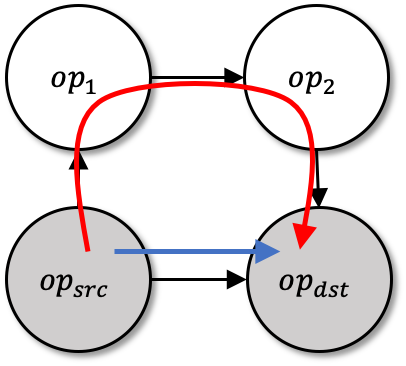}
        \caption{}
        \label{fig:cycle_cycle}
    \end{subfigure}
    \hfill
    \begin{subfigure}{.23\linewidth}
        \includegraphics[width=\linewidth]{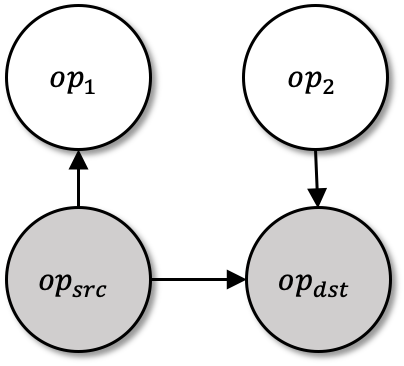}
        \caption{}
        \label{fig:cycle_no_cycle_1}
    \end{subfigure}
    \hfill
    \begin{subfigure}{.23\linewidth}
        \includegraphics[width=\linewidth]{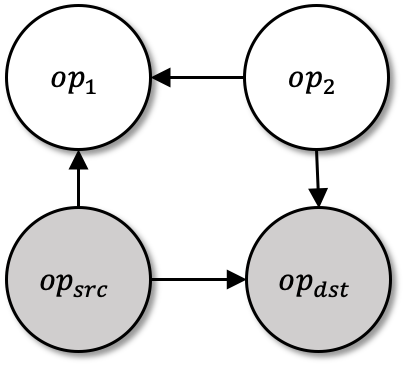}
        \caption{}
        \label{fig:cycle_no_cycle_2}
    \end{subfigure}
    \hfill
    \begin{subfigure}{.23\linewidth}
        \includegraphics[width=\linewidth]{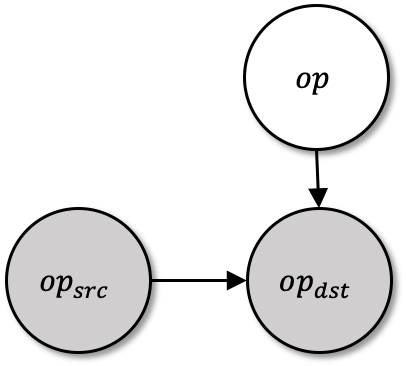}
        \caption{}
        \label{fig:cycle_one_outdegree}
    \end{subfigure}
    \hfill
    \begin{subfigure}{.23\linewidth}
        \includegraphics[width=\linewidth]{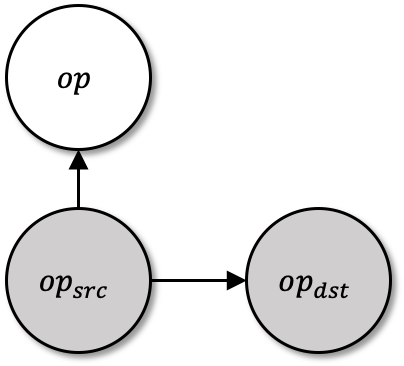}
        \caption{}
        \label{fig:cycle_one_indegree}
    \end{subfigure}
    
    \caption{
    \it {\bf Operator Fusion Without Creating Cycles.} 
    (a) shows a fused ML Graph Example. When $op_{src}$ and $op_{dst}$ are fused, some scenarios create a cycle (b), while others do not (c, d, e, f). \sysname fuses operators in a subset of ``safe'' cases, particularly (e, f).
    }
    
    \label{fig:operator_fusion_cycle}
\end{figure*}

Upon placing the first operator in a colocation group, \sysname uses  both the co-placement heuristic and the colocation constraints (Section~\ref{sec:tf_colocation}) to determine which other operators to also place on the same device. Co-placement not only minimizes communication overheads but also speeds up the placement time by 
reducing the overhead of calculating schedulable times on devices.

\subsubsection{Operator Count Minimization}
\label{sec:operator_fusion}
Placement time can be decreased by reducing the number of operators/groups to be placed. We do this via two additional methods: \\
i) {\it Operator Fusion}: Fusing operators that are directly connected and in the same co-placement group; and \\
ii) {\it Forward-Operator-Based Placement:} Placing operators by only considering the forward operators.

\medskip\noindent{\it \textbf{Operator Fusion. }}
\sysname fuses {\it operators}  using either the colocation constraints (Section~\ref{sec:tf_colocation}) or co-placement optimizations (Section~\ref{sec:co-placement}). \igr{This is new and different from TensorFlow's fusion of {\it operations}.} One challenge that appears here is that this may introduce {\it cycles} in the graph, violating the DAG required by our algorithms.  

Figure~\ref{fig:fused_op_graph_cycle} shows an example resulting from Figure~\ref{fig:example_graph}---a cycle is created when {\tt Step} and {\tt UpdateStep} are fused into a new meta-operator, and {\tt Weight} and {\tt ApplyGrad} are fused.

Consider two nodes--source and destination--with an edge from source to destination. 
Merging source and destination creates a cycle if and only if there is 
{\it at least one additional path from source to destination,} 
other than the direct edge. Note that there cannot be a reverse destination to source path as this means the original graph would have had a cycle.

In Figure~\ref{fig:cycle_cycle}, fusing $\texttt{op}_\texttt{src}$ and $\texttt{op}_\texttt{dst}$ creates a cycle.
Unfortunately, we found that pre-checking existence of 
{such additional paths} 
before fusing two operators is  unscalable, because the model graph is massive. 

\igr{

Instead, \sysname realizes that a {\it necessary} condition for an additional path to exist is that the source  has an out-degree at least 2 {\it and} the destination has an in-degree at least 2 (otherwise there wouldn't be additional paths). Thus \sysname uses a conservative approach wherein it fuses two operators only if the negation is true, i.e., {\it either} the source has an out-degree of at most 1, {\it or} the destination has an in-degree of at most 1 (Figures \ref{fig:cycle_one_outdegree}, \ref{fig:cycle_one_indegree}). This fusion rule misses a few fusions (Figures \ref{fig:cycle_no_cycle_1},  \ref{fig:cycle_no_cycle_2}) but it catches common patterns we observed, like  Figure~\ref{fig:cycle_one_outdegree}. 

}

\begin{figure}
        \centering
        \hspace*{\fill}
        \begin{subfigure}[b]{.28\linewidth}
            \includegraphics[width=\linewidth]{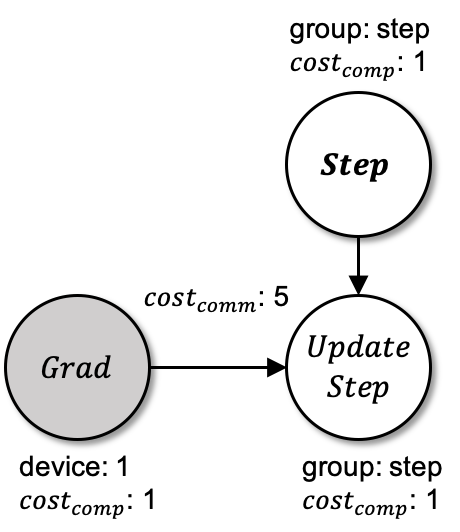}
            \caption{}
            \label{fig:fusion_before}
        \end{subfigure}
        \hfill
        \begin{subfigure}[b]{.3\linewidth}
            \includegraphics[width=\linewidth]{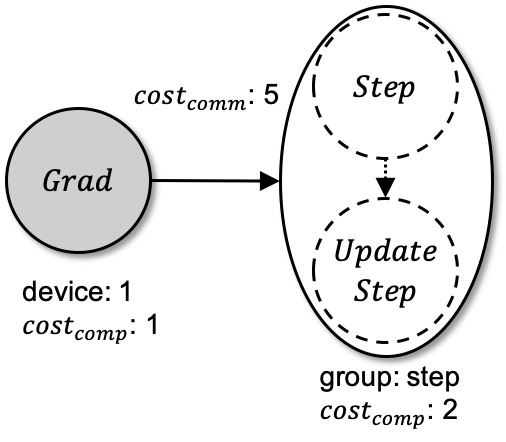}
            \caption{}
            \label{fig:fusion_after}
        \end{subfigure}
        \hspace*{\fill}
        \caption{
        \it {\bf Operator Fusion.} Avoiding Data Transfer Example. (a) Before Fusion. (b) After Fusion. 
        }
        \label{fig:fusion_avoid_data_transfer}
\end{figure}

\medskip\noindent{\it \textbf{Forward-Operator-Based Placement. }}
\label{sec:forward_operator_based_placement} When memory is sufficient (i.e., one device could run the entire model), \sysname  considers only forward operators for placement and thereafter co-places each corresponding backward (gradient) operators on the same respective device as {their forward counterparts.} This is a commonly-used technique~\cite{colocRL, placeto}.
This significantly cuts placement time. 
When device memory is insufficient,  
\sysname runs the placement algorithms using both forward and backward operators, forcing corresponding pairs to be co-placed using the heuristic of Section \ref{sec:co-placement}.

\medskip\noindent{\it \textbf{Example: Benefits of Fusion. }}
Figure~\ref{fig:fusion_before} shows the placement of a subgraph of Figure \ref{fig:example_graph} on two devices. 
\sysname first places \texttt{Grad} on \texttt{device-1}.  \sysname places the next operator, \texttt{Step} on the idle \texttt{device-2}, and colocates (due to TF constraints) \texttt{UpdateStep} on \texttt{device-2}. This creates communication between the devices. Assuming operators' compute costs are 1, and communication cost between \texttt{Grad} and \texttt{UpdateStep} is 5, this results in an execution time of 7 time units.

\bjc{On the other hand, Figure~\ref{fig:fusion_after} shows that \sysname merges \texttt{Step} and \texttt{UpdateStep} with operator fusion.}
Since this meta-operator's schedulable time on {\tt device-1} is earlier than on {\tt device-2} due to  communication overhead, \sysname places it on \texttt{device-1}. Fusion   lowers total execution time to 3 time units.

\medskip\noindent{\it \textbf{Loops in the Original Model Graph. }}
Different from the cycles discussed above, some network graphs consist of loops, e.g., RNNs. We use the unrolled ML graph 
\cite{loop_unrolling} to turn the graph into a DAG, and then apply \sysname's techniques.

\subsubsection{Sequential vs. Parallel Communication}
\label{sec:sequential_data_transfer}
Our algorithms from  Sections~\ref{sec:etf} and \ref{sec:sct_algo} assume that each operator can send data simultaneously to its children. 
\igr{\sysname also proposes a new way to deal with environments involving constrained networks  (including our deployment in Section~\ref{sec:eval}), where data transfer is sequential.
} 
For networks that limit each device to do at most one transfer at a time (out or in), 
\sysname assumes communication queues at  devices. 
Concretely, when a data transfer between two  devices is requested, \sysname assumes the request is put into the respective devices' communication queues and processed sequentially at both ends. 
During placement, \sysname calculates the wait time at the communication queues and adds it to the earliest schedulable time computed for the operator. 
{Specifically the queue wait time is added to equation (\ref{eq:schedulable_time}) in Section~\ref{sec:etf}.} Otherwise, normal m-SCT/m-ETF apply, as described earlier.  

\subsection{\textbf{Design of \pyname 
}}

\label{sec:design_py}
\label{sec:pytdesign}

{We remind the reader that unlike \tensorflow's fine-grained operators and known communication graph, \pytorch: 
(i) has coarser modules, and (ii) requires the programmer to explicitly program  cross-device communication. 

Concretely---first, \pytorch models are built by composing different modules. The model is not natively available as a graph unlike  \tensorflow. To feed the model to \sysname's algorithms, Section~\ref{sec:py_basic} describes how we construct a graph by using \pytorch's Autograd \cite{pytorch} which tracks the flow of tensors among the modules of the model. Second, the primitive \code{.to()} API provided by \pytorch for developers to program communication is inefficient and manual. To address this, Section \ref{sec:py_comm} presents our communication protocol to handle  cross-device transfers efficiently and automatically.
}

\subsubsection{\sysname-\pytorch Graph}
\label{sec:py_basic}

Developers build \pytorch models by composing modules, i.e, classes inherited from \pytorch's \code{nn.Module}. Each module contains its tensor parameters (e.g., weights of a linear layer) and a \code{forward()} method, which defines how the module modifies the input. By default \sysname{} treats modules as 
the nodes of the graph in \pyname. Placing a module on a device means moving all its parameters to the device before beginning the training. 
During the training, our communication protocol (next Section~\ref{sec:py_comm}) ensures that the input to that module is also moved to the same device. Subsequently the \code{forward()} operation of the module will 
be invoked at runtime, and it will be executed on the assigned device. 
A model may also include operations not defined as \pytorch modules. For instance, arithmetic operations like scaling (e.g: \code{x=x/2}). But these operations usually do not have any associated parameters that must be assigned to a device by \pyname. Hence we exclude them from the graph during the placement planning. 
{By default, this associated operation will be executed on the device of the input tensor, i.e., \code{x}'s device in this case}\footnote{Arithmetic operations may still take some time to complete and memory to store their outputs, but we observed that their impact on the overall step time and memory budget is small, thus we  ignore them in generating placements. If desired, such operations may be defined as \code{nn.Modules} and be included in the placement graph.}. 

\pyname constructs the model's graph in two steps. First we obtain all modules that constitute the model and {\it co-place} modules occurring in common design patterns. Second we obtain the edges of the graph by tracking the flow of tensors using \pytorch's Autograd \cite{pytorch}. This approach is similar to PipeDream \cite{pipedream}. 

\noindent {\bf Co-placement:} Treating modules as nodes, we observed it is common for models to contain specific subgraphs (of modules) that occur as common design patterns throughout the model graph. \pyname{} groups such subgraphs into a single node---this is called co-placement (unlike \tfname's co-location constraints in Section \ref{sec:design}, this co-placement is a performance optimization in \pytorch). For instance in Inception models, the subgraph (Conv2d)$\rightarrow$ (Batch Norm) $\rightarrow$  (inplace ReLU) occurs commonly, and \pyname{} groups each occurrence as one node in the computation graph. 
{This co-placement allows \pyname{} to  avoid communication of tensors along the two edges of this subgraph. }
{An additional benefit of co-placement is that it significantly reduces the size of the computation graph, thus making \sysname's algorithms (Section~\ref{Methodology}) run faster. For instance, co-placement reduces the number of nodes in Inception-V3 \pytorch by 60\%, from 325 to 133.} 
{By default, \pyname uses the most atomic modules, i.e., not further divisible, e.g., 2D Convolution module (Conv2d). If the developer wishes, they can programmatically specify which modules  \pyname should be co-placed and treated as individual nodes. }

\noindent{\bf Building the graph:}
Next, Baechi obtains edges between the nodes. To do this, we run a training step of the model with dummy data. \igt{(We  run 20 such dummy training steps, also helping us profile  memory usage and  computation times of all the nodes.) } 
During the forward run, we annotate each intermediate output tensor with the 
node
that generated it. Meanwhile, \pytorch's Autograd automatically fills in the gradient function (\code{grad\_fn}) for each tensor created. Further, each such \code{grad\_fn} has a list of \code{grad\_fn} of tensors used as inputs in creating this tensor. Autograd stores this list to perform back-propagation later. We use this information that traces the tensor gradient functions, along with our tensor to node annotation, to construct a 
dependency graph among modules of the model.

It is possible that some gradient functions may include operations not related to any module, e.g., Autograd-specific operations such as \code{SelectBackward} or gradients of arithmetic operations.
But since these operations do not need a device placement, they are removed to obtain the dependency graph only between nodes of the model \igt{(they are added back in after the model is placed)}.

\subsubsection{Communication Protocol}
\label{sec:py_comm}

\pytorch provides native support for synchronous communication, which can be inefficient. If asynchronous data transfer is used, the developer is required to carefully insert synchronizations to ensure correctness. 

We design a general communication protocol for cross-device communication that is efficient and automated, thus relieving the developer from specifying manual configurations. We do so by leveraging {\it CUDA streams}, an abstraction that allows overlapping multiple sequences of operations in a GPU  \cite{cudastreams}.

To move a tensor \code{T} to $GPU_{0}$, \pytorch provides an API \code{T.to(0)}. To ensure correctness, \code{.to()} conservatively blocks both sending and receiving devices until all the operations submitted to both devices so far are completed. 
This can be avoided by leveraging the abstraction of {\it CUDA streams}~\cite{cudastreams}, in order to overlap communication with computation. 
{A CUDA stream in a GPU is a FIFO queue of operations that will be sequentially executed on the GPU. By default, all operations submitted to a GPU are placed on a single stream. To perform two operations in parallel, they must be placed on two separate streams on the GPU. Accordingly, on each GPU, \pyname defines one {\it compute stream} and multiple {\it communication streams}.  The compute stream queues the computations corresponding to the modules placed on the GPU, while the communication streams concurrently move the relevant tensors across the devices.}

{However, CUDA streams need to be programmed carefully to specify synchronization points that  obey dependencies in the model graph.} We use CUDA Events provided by the runtime API \cite{cudaevent1} to synchronize the independent streams.

{None of the existing ways of using CUDA streams in \pytorch fits our needs. } 
Concretely---first, in \pytorch-Distributed \cite{pydist} and PipeDream~\cite{pipedream}, training steps proceed in stages, each working on a different batch of data. All tensors generated in a given stage are transferred to the next device at the end of the stage. This pattern, where all communication happens synchronously only after all computations are complete requires few, if any, synchronizations. In contrast, Nimble \cite{nimble}, deals with multiple parallel streams working on the same batch of data, like in Model Parallelism. Compute and communication events may asynchronously occur at any time and it requires synchronizations to preserve data dependencies. However, Nimble is a single-GPU system. 

{In \pyname, we use a {\it greedy-wait} strategy.} 
Concretely, first we {\it greedily} push out the output of a node, as soon as it is computed, to the devices of its children nodes. Second, before starting the compute operation, a child node must {\it wait for all} its incoming input stream or if the input has already been transferred it must pull the copy of the inputs on its devices. We implement our greedy-wait communication protocol as a wrapper around the \code{forward()} function of each node.

{\SetAlgoNoLine%
\begin{algorithm}
    
    \SetKwInOut{input}{Input}

    \For{each parent of the \textbf{node} in graph}{
        (node's compute\_stream) {\bf wait for} (rx\_stream from parent's device)\;
    }
    \BlankLine
    {\bf On } node's compute\_stream:\\
    \qquad input\_local = local copies of {\bf input} on node's device\;
    \qquad output = \code{forward\_operation}(input\_local)\;
    \BlankLine
    \For{each child of the node in graph}{
        (tx\_stream to child's device) {\bf wait for} (node\_compute\_stream)\;
     }
    \BlankLine
    \For{each child of the node in graph}{
        {\bf Using } (node's tx-rx\_stream pair to child's device ):\\
                \qquad send output to child\;
        
    }
    {\bf return} output
 \caption{Communication protocol built around each node's \code{forward(\bf{input})  } 
 }
 \label{alg1}
\end{algorithm}}%

Algorithm \ref{alg1} shows the complete communication protocol, and we describe it in detail.  Before we start the training, for a given node and its child node on a different device, we create two streams - a {\it tx-stream} on the node's device and a {\it rx-stream} 
on the child node's device. Only one such stream pair is sufficient for a child device, even if multiple children nodes are on that device. So if a node on $GPU_{0}$ has two children, one on $GPU_{1}$ and another on $GPU_{2}$, then for that node we create one {tx-rx} stream pair each to $GPU_{1}$ and $GPU_{2}$. We do this for every node and for every device any of its child nodes reside in.

Each device's compute stream 
queues the computations of nodes assigned to that device. When a node reaches the head of the compute stream of its device,  the compute stream is made to wait for all the {rx}-streams to that device from all the parent nodes (line 2). 
{The {rx}-streams carry a copy of the output tensors of the parent nodes to the device of the current node. Once all {rx}-streams have completed the transfer, } 
{these tensors are passed as inputs to the actual forward computation of the node (lines 4-6)}.
All the {tx}-streams egressing from this node are made to wait for the compute stream to finish the computation (lines 7-9). 
{The {tx}-stream carries the outputs of the node to the child node's device and the corresponding {rx}-stream on the child node's device receives it.} As soon as the output is ready, it is asynchronously sent to all the child devices through the {tx}-streams and received asynchronously at the child devices through their respective {rx}-streams (lines 10-12). The compute stream can move to the next node assigned to it while the {tx}-streams are transferring out the output. Similarly, the compute streams on child devices are not interrupted by incoming tensors on their rx-streams. Note that the output is sent to a child device only once even if multiple children reside on that child device. The {tx}-{rx} streams serve as synchronization points in addition to overlapping communication and computation. 

{Our m-ETF and m-SCT algorithms from Sections~\ref{sec:etf} and \ref{sec:sct_algo} assume that each node can send data simultaneously to its children. Our protocol mimics this communication with multiple outgoing {tx}-streams transferring the tensors to child devices in parallel. While such overlapping {tx}-streams from a device may slightly increase the communication times in each of the streams, we observed that the associated effect on step time is small.}
Also as long as we facilitate and synchronize the forward run correctly, \pytorch's Autograd \cite{pytorch} ensures that the back-propagation correctly executes in the reverse order of the forward sequence. It automatically manages input-output dependencies and synchronizations in the reversed order.

\section{Implementation}

\begin{figure}
        \centering

        \includegraphics[width=0.8\linewidth]{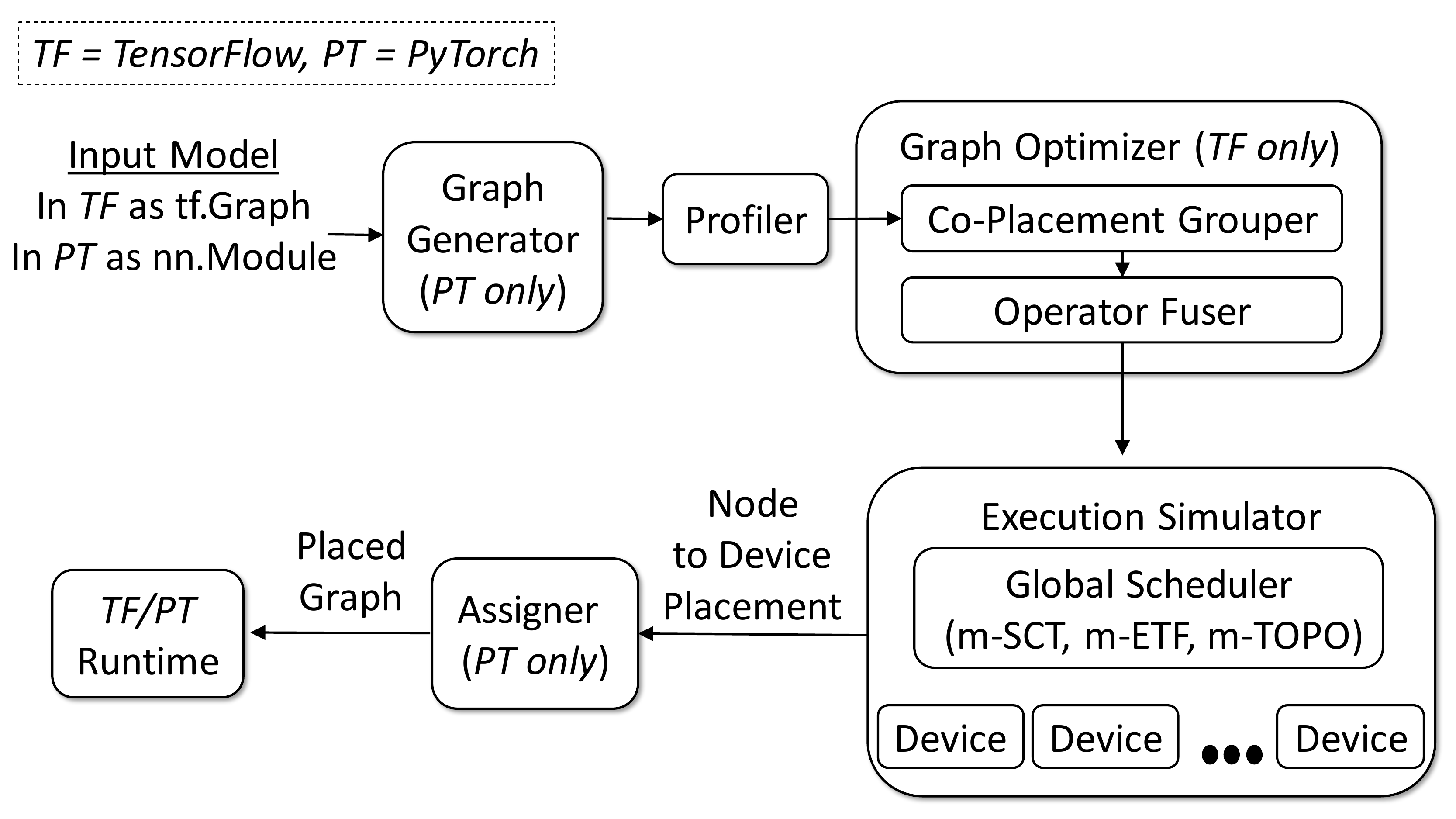}
        \caption{{\it {\bf \sysname System Architecture}} 
        }
        \label{fig:workflow}
\end{figure}

In order to integrate modularly with TensorFlow (TF)~\cite{tensorflow} v1.12 and \pytorch v1.9 \cite{pytorch}, \sysname adopts the architecture shown in Figure~\ref{fig:workflow}. \sysname executes the following steps: 1) its {\it Graph Generator and Profiler} constructs the graph annotated with each node's time and memory requirements,  
2) \sysname's {\it Graph Optimizer for \tensorflow} uses the design of Section~\ref{sec:design} to account for TF colocation constraints, and applies co-placement {and operator fusion},  
3) \sysname's {\it Execution  Simulator (ES)} (Section~\ref{esim}) executes our algorithms (m-TOPO, m-ETF, or m-SCT) and generates the placement (in \tensorflow a ready-to-use placed graph is output),  and 
4) the {\it Assigner for \pytorch} (Section~\ref{assigner}) modifies the model to allow execution according to the generated placement. The placed graph can then be used in the training script as the drop-in replacement for the single-GPU model, in both \tensorflow and \pytorch.  
{Next we describe each of the components in detail. }

\subsection{Graph Generator, Profiler and Optimizer}
\label{sec:py_profiler}

In \tfname, the model is already given as a static graph. The Profiler then measures and annotates each node with its time and memory requirements. \sysname parses this annotated graph and generates an equivalent intermediate NetworkX \cite{networkx} graph. The NetworkX format allows \sysname to both store operator execution metadata (computation and communication times, memory needed, etc.), and to easily manipulate the graph (e.g., fuse operators). Then, in case of \tfname, we additionally apply the co-placement and operator fusion optimizations (Sections \ref{sec:co-placement}, \ref{sec:operator_fusion}) to the graph.

In \pyname, the Graph Generator constructs the graph corresponding to the input model as per  Section~\ref{sec:py_basic}.

For  communication time, we use a linear model proportional to data size. Concretely, we implemented a microbenchmark tool to measure communication times for various data sizes, and generated a communication cost function through  linear regression.

\subsubsection{Profiler:}

The Profiler measures 
{computation times} 
and memory requirements of each node in the graph. 
In \tfname, we use the standard \tensorflow profiling tool to obtain computation time and memory allocation for each operator. \tensorflow profiler returns  allocation information for temporary, permanent, and output tensor memory. The temporary memory is allocated at the beginning of an operation and deallocated when the operation finishes. The permanent memory is allocated and used over the entire execution, e.g., to store {persistent states such as} weights. 
%
{For \pyname we build a simple profiler, akin to that in \cite{pipedream}, for measuring time using hooks \cite{pyhook}, and for measuring memory~\footnote{While \pytorch has an internal profiler using it would have  required us to  handle dependencies and operation granularity carefully. Our simple approach avoids these. }.
}

\medskip\noindent{\it \textbf{\pyname Profiler Memory Estimation }}

In \pytorch, GPU Memory required to hold all the tensors is reserved once during the first training step and reused in subsequent steps. To model the memory usage pattern, 
 we categorize each node’s memory as consisting of 5 components:

{
\begin{enumerate}[(a)]
\item Parameters memory: Memory occupied by parameters of the node;
\item Output Memory: forward output of the node;
\item Parameter gradient: gradients of parameters of the node;  
\item Upstream gradient: gradient of output of the node; 
\item Memory temporarily used in computing the
output/gradient.
\end{enumerate}}

\begin{table}
\begin{center}
\begin{tabular}{ | m{5em} | m{2cm}| m{2cm} | } 
  \hline
  Memory & Inference & Training \\ 
  \hline
  Permanent  & (a) & (a) + (b) + (c) \\ 
  \hline
  Temporary  & (b) + (e) & (e) + (d) \\ 
  \hline
\end{tabular}
\end{center}
\caption{Memory Consumption in \pytorch}
\label{tab:pyprofiler}
\end{table}

{
 Table~\ref{tab:pyprofiler} summarizes how these five metrics are used in training and inference, and whether they are used as temporary memory or permanent memory. We describe each term. 
}
{During the training phase, each node requires memory to store: (a) its parameters, (b) the node's forward output tensors, and (c) parameters' gradient information. In \pytorch, memory to store parameter gradient information is acquired once in the beginning of training and permanently held until the end of training. Similarly, output tensors ((b)) are treated as permanent memory since during each forward run, outputs of all nodes must be stored. They are required later during back-propagation. In contrast during the inference phase (forward only runs), output of a node is temporary since it is immediately released after being consumed by the subsequent node. During back-propagation in training, memory is also required to hold the gradient of output of the node ((d)). 
This is a temporary requirement since this memory is released after the output gradient has been used to compute the node's parameter gradients and its input gradient.  Nodes may also  require temporary memory 
while performing these computations ((e)). 
For example, 
in computing the parameter gradients, a temporary matrix is used to store all the gradients and then the node's parameter gradients are updated at once.} 
{For in-place nodes like the in-place ReLU, (b) is set to 0 since no new output tensor is generated ((a) and (c) are also 0 incase of ReLU since it has no associated parameters).}

\subsection{Execution Simulator}
\label{esim}
\sysname's Execution Simulator (ES) executes the algorithms on the profiled graph and generates the placement. The ES takes as input the NetworkX operator graph, the number of GPUs and the memory capacity of each of them. The output is the graph in which all operators are assigned to devices. We describe the common parts of the ES across both \tfname and \pyname, explicitly pointing out differences as  necessary.

Our initial attempt was in fact to try re-purposing  \tensorflow’s (existing) simulator, but using that required us to assume operators were already placed, zero communication cost, and no caching---these were inapplicable to \sysname. Motivated by this, we designed \sysname's new ES uniquely for  memory-constrained placement.

The ES consists of: a) a global scheduler, and  b) simulated devices (with specs identical to deployment). 
The global scheduler maintains a single queue with operators that are ready to run. The scheduler extracts operators from its queue and applies our scheduling algorithms (m-TOPO, m-ETF, m-SCT) to place them on devices. 

In ES, each device has two FIFO queues, one for operators and one for data transfer. This allows data transfer to overlap with operator execution. When a device receives a tensor from another device, it caches the tensor to avoid duplicate data transfer.

\medskip\noindent{\it \textbf{Dynamic Memory Allocation. }}
 
Calculating a device's memory usage as the sum total of all its assigned operators (assigned over the entire duration) clearly overestimates memory. 
For example in \tensorflow,  Inception-V3 with  batch size 32 can execute using 4 GB even though its operators' memory needs add up to 22 GB. %

In generating the placements, the ES calculates memory in a way that parallels how the frameworks manage memory. 
Concretely \sysname's ES tracks an estimate of memory usage during its placement. When an operator executes on a device, the device allocates temporary memory, and separate memory for its output tensors. The temporary memory is deallocated when the operator finishes. There are minor differences in the ES for \tensorflow and \pytorch. 
{\tensorflow uses separate operators for forward and backward computation. The output memory of an operator} is deallocated after all its successors finish. 
In \pytorch, 
{the forward and backward computation runs in a single module. The output memory of a module is held until its backward computation is completed.} The output memory is treated as a part of the permanent memory as explained in the previous Section \ref{sec:py_profiler}. If a device's memory becomes full, \bjc{the device} can be removed--this never happens in practice as usually a device has at least a few bytes left. 

Note that in \tfname, memory is reserved for a colocation group at device $p$ when the first operator is placed on $p$ (Section~\ref{sec:tf_colocation}). The reserved memory is deallocated when all the operators in the group finish.

\smallskip\noindent{\it \textbf{Linear Programming Solver. }}
To solve the SCT LP problem, we use the interior point method~\cite{convex_optimization}.
This  is preferable over other solvers such as simplex \cite{bartels1969simplex}  as it guarantees polynomial execution time \cite{tomlin1989note}. Concretely, we use the primal dual interior-point solver via Mosek optimization  \cite{mosek}, which has a run time complexity of $O(n^{3.5}L)$, where $L$ is the maximum number of bits in the LP input, and $n$ is the number of variables.

\subsection{Assigner \textit{in \pyname}}
\label{assigner}
{Once the mapping of graph nodes to devices is decided, the Assigner contains the mechanism to initialize the assignment. 
The Assigner for \tensorflow merely  changes the device attribute of the nodes. The Assigner for \pytorch requires multiple steps---it needs to: i) move nodes' parameters to the assigned devices, ii) send output tensors from a node to devices of child nodes at the device boundaries, and iii) avoid {naive} use of \code{.to()} for cross-device communication as this leads to  inflated step times owing to unnecessary blocking of the devices (see Section~\ref{sec:py_comm}).} 

\pyname's Assigner enables this process
by automatically adding wrappers around the \code{forward()} methods of the nodes. The wrapper transparently handles the communication and caching of the tensors using the communication protocol in
Section~\ref{sec:py_comm}. This way, the developer does not need to make any changes to the input model code written for a single-GPU execution. The output of the Assigner is a 
model assignment that can be used as a drop-in in any  existing training script.

\subsection{Miscellaneous Issues}
\label{secmisc}

We discuss a few key miscellaneous aspects.

\medskip\noindent{\it \textbf{LP Modifications. }}
The ILP solutions (Section~\ref{sec:sct_algo}) resulted in more than one  favorite child (or parent) being selected for certain nodes. {In \sysname we lowered the rounding threshold from 0.5 to below 0.2. This eliminated all violations, and avoided nodes from having multiple favorite children. (We use threshold = 0.1 in practice.)}

\medskip\noindent{\it \textbf{Ignoring Bootstrap Steps in  Profiling. }}
1) In a training run of a model graph, step times are initially high due to \tensorflow bootstrapping. We estimate step times in steady state, after a few iterations have passed. 2) Some \tensorflow operators are implemented with multiple GPU kernels. When profiling these operators, we include multiple kernel executions, in order to avoid underestimation. This is similar to \tensorflow's cost model~\cite{tensorflow}.

\medskip\noindent{\it \textbf{Reordering Layers in \pytorch}.} Dynamic line-by-line execution in \pytorch means that the modules' \code{forward()} functions will be called in the order in which they appear in the code rather than in the topological order followed by \sysname' ES (Sec. \ref{esim}). 
We reorder the actual execution of modules' computations on GPUs by launching a thread when a module's \code{forward()} is called. The thread waits until its topological parent (according to the ES) has submitted the computation task to the GPU and only then submits its task. 
However, 
such reordering does not give a noticeable advantage over just executing the code order\old{given the largely linear structure of the models in \pytorch}since the two orders do not differ significantly in most cases.

\section{Evaluation}
\label{sec:eval}

Our evaluation answers the following six  questions:\\
1. How fast is \sysname's {\it placement time}, i.e., how quickly do our algorithms find placements? (Section~\ref{sec:placement})\\
2. How fast are the {\it step times} of the placement generated by \sysname, i.e., training time per step of the placed model? (Section~\ref{eval:sufficient_memory})\\
3. How do the step times for \sysname compare to {\it single GPU} and {\it expert placements}? (Section~\ref{eval:sufficient_memory})\\
4. How do the \sysname's step times change when there is {\it insufficient memory} per GPU?  (Section~\ref{eval:insufficient_memory})\\ 
5. How much is the benefit due to \sysname's {\it optimizations} from Sections~\ref{sec:co-placement}, \ref{sec:operator_fusion} and \ref{sec:py_comm}? (Section~\ref{sec:eval:opt})\\
{6. Are algorithmic approaches preferable over RL approaches for model parallelism? (all subsections).}

\subsection{Experimental Settings}
\label{sec:experiment_setup}

{
We use two popular ML benchmarks for each framework: A) for \tensorflow, we use Inception-V3 and Google Neural Machine Translation System (GNMT), and B) for \pytorch, we use Inception-V3 and a Transformer model. The former choice is because: i) Inception-V3 and GNMT are respectively considered the best representatives of vision and Natural Language Processing (NLP) models, and ii) past work~\cite{placeto,colocRL, hierarchicalRL} used Inception-V3 and NMT (GNMT is a more complex version), thus allowing us to compare. 
For \pytorch we replace GNMT with Transformer as the former is implemented using the LSTM module \cite{pylstm1}, making the (latter) Transformer  a more complex and generalized version. We describe the three benchmark configurations in detail below: 
}

\medskip\noindent{\it \textbf{Inception-V3  Benchmark Configuration. }}
Inception-V3~\cite{inceptionv3} is a convolutional neural network architecture that is widely used for image classification. 
This model is composed of multiple blocks called Inception modules. The Inception modules consist of branches of convolutional and pooling operators. 
To train the model, we use RMSProp~\cite{rmsprop} and batch sizes of both 32 and 64.

\medskip\noindent{\it \textbf{GNMT Benchmark Configuration. }}
Google Neural Machine Translation System (GNMT) \cite{gnmt} is a language model for automated translation. GNMT consists of: encoder and decoder modules, each  a stack of recurrent neural networks (RNNs); and the attention module to process long sequences effectively. We use 4 long short-term memory (LSTM) layers of the encoder and the decoder layers with residual connections, and the Bahdanau attention mechanism \cite{BahdanauAttention}. We use the LSTM hidden size of 512, the vocabulary size of 30,000, the unrolled RNNs
with the sequence length of 40 and 50,
and the batch size of 128 and 256. 
{\tfname applies the co-placement optimization to LSTM cell operators and also to attention operators.}

Compared to Inception-V3, GNMT has fewer barriers (sync points) inside its model graph, indicating that GNMT has a higher potential to benefit from \tfname's parallel placements.

\medskip\noindent{\it \textbf{Transformer  Benchmark Configuration. }} Transformers are a versatile family of models used in vision as well as language. 
Like Neural Machine Translation (NMT), Transformers have  an encoder-attention-decoder architecture. But while NMT processes one word at a time, Transformers use 
multi-head attention modules that process the entire sequence at once. In \pytorch, we implement an attention operation in the traditional way \cite{transformerCode}---as one large matrix multiplication and hence as a single module. 
For concreteness, we use the base Transformer model from \cite{transformer} (without weight sharing) with a vocabulary size of 30,000, sequence length of 50 and batch sizes of 64, and 
128.

\medskip\noindent{\it \textbf{Machine Setup. }}
All experiments are run on our local server that has 4 NVIDIA GTX 2080 GPUs, with 8 GB per-GPU memory {(the machine also has an Intel i9-7960X CPU, but this is not used to execute operators).}
GPUs are connected to CPUs via PCIe 3.0 x16  (we do not use NVLink \cite{nvlink}).
All data transfers go through the host memory (no P2P communication among GPUs).  This results in a  slow IO bus, and we believe this high ratio of communication overhead to computation overhead is representative of realistic scenarios like the kinds outlined in Section~\ref{introduction}. We place all GPU-supported operators only on GPUs.

\medskip\noindent{\it \textbf{Approach to Comparison. }} To quantify the benefits of using an algorithmic approach to model parallelism over a Reinforcement Learning (RL)  approach for model parallelism, 
we compare \sysname to the best RL-based model parallelism techniques: \cite{placeto,colocRL, hierarchicalRL}. 
 Directly running these other systems was complicated by lack of uniform availability of working code---Placeto's code~\cite{placeto} missed key optimizations; ColocRL~\cite{colocRL} is proprietary; only HierarchicalRL's code~\cite{hierarchicalRL} was available, but it was slow and generated inefficient placements. E.g., For GNMT, HierarchicalRL took 12 hours+ to run placement (batch size 128, length 50) and the resultant step time was much higher than expert's, contrary to HierarchicalRL paper's claims. Essentially, direct comparison would be unfair to these other papers without knowing the exact hyperparameters they  used to achieve their ``best'' performance. In light of this, our comparison gives the benefit of doubt to, and uses the best performance from, these learning-based placement papers. All the above papers compared step times to experts, and we do too. {We do not compare to other algorithmic techniques for model parallelism (listed in Section \ref{introduction}) because of either: their  standalone nature  \cite{amarpaper} (making a \tensorflow/\pytorch comparison unfair), or their limitation to  Transformers \cite{terapipe,  megatron}, or because they are already shown to be comparable to our performance \cite{pesto}.  This also keeps our evaluation focused on comparing algorithmic approaches to RL approaches for model parallelism. 
 
  }

\begin{table}[tb]

  \centering

  \begin{tabular}{cccc}
    \toprule
    Model  &  HierarchicalRL \cite{hierarchicalRL} &  Placeto \cite{placeto} & \sysname (m-SCT) \\
    \midrule
    Inception-V3 & 11 hrs 50 mins & 1 hr 49 mins & \textbf{1-10 seconds} \\
    NMT (GNMT)  & 1 day 21 hrs 14 mins & 2 days 20 hrs 40 mins & \textbf{1.2-48 seconds} \\
    Transformer  & N/A & N/A & \textbf{1-3 seconds} \\
    \bottomrule
  \end{tabular}

  \vspace{0.1in}
  \caption{\it {\bf Placement Time.} Time to Generate a Placement for our target machine with 4 GPUs. 
}
  \label{tab:Performances_of_baseline}
\end{table}

\subsection{Placement Time}
\label{sec:placement}

Table~\ref{tab:Performances_of_baseline} shows both: 1) measured placement times of \sysname, and 2) calculated placement times for two learning-based techniques, namely: 
HierarchicalRL~\cite{hierarchicalRL} and Placeto \cite{placeto}. The numbers for HierarchicalRL and Placeto are normalized quantities, both derived from numbers reported in  {Addanki~et~al.}~\cite{placeto}. 
For these two systems, we multiply the fastest  step time among its reported placements,  by the number of  placement samples\footnote{Even  if one were to parallelize the learning-based placers, their resource usage would be similar to the normalized time metric we show.  }. {For instance, 
HierarchicalRL's~\cite{hierarchicalRL} Inception-V3 placement training time is derived as a product of the reported final step time (1.19 s) and the number of samples (35,800), giving 42,602 s, or 11 hrs 50 mins.
}

Hence, the numbers for these learning-based placers are their best-case performance. In comparison, we use the worst-case placement times from \sysname, specifically from m-SCT which took the longest to generate a placement. 
{Note that all times  in Table~\ref{tab:Performances_of_baseline} exclude time to profile the graph, 
as profiling is a common baseline encountered by all the three approaches shown. } 
{We find the profiling time to be low: about 10--12 s total for Inception-V3 and GNMT {in \tfname, and about 11--14 s for Inception-V3 and Transformer in \pyname.} For instance, in \tfname, this breaks down as 2-4 s for warmup execution, 1--3 s for graph execution for profiles, and less an 1 s for parsing  profile results.  }

Table~\ref{tab:Performances_of_baseline}  shows that \sysname places ML models
orders of magnitude faster than the learning-based approaches. For Inception-V3, \sysname reduces placement time, from 1.8--11.8 hours (using existing techniques~\cite{hierarchicalRL,placeto}),  
to under 10 s in both \tfname and \pyname. 
{Thus \sysname is $654\times$--$42.6$K$\times$ faster at placing Inception-V3. }
For GNMT, \tfname reduces placement time from several days to under 48 s. 
{Thus \sysname is $3392\times$--$206$K$\times$ faster at placing  GNMT.} 
{For Transformer, \pyname places it under 3 s. Because HierarchicalRL and Placeto \cite{placeto, hierarchicalRL} did not include Transformers in their evaluation, the corresponding entries are marked as not available in   Table~\ref{tab:Performances_of_baseline}. }

Overall, \sysname is 
{$654\times$--$206$K$ \times$} faster at placement compared to today's learning-based approaches \cite{placeto,hierarchicalRL}.

\subsection{Placement with Sufficient Memory}
\label{eval:sufficient_memory}

We next evaluate the effectiveness of the generated placement by measuring the step time 
of the placed model, i.e., its time to execute 1 training step on an input data batch. \igt{Step time is a key metric as completion time on a training set is directly proportional to step time.} We first explore the scenario when each GPU has sufficient memory to run the entire model. We compare against both: i) step time on a single GPU, which might be fast because it avoids the overheads of communication, and ii) an {\it expert}-based placement scheme for placement on multiple GPUs. 

The expert is a manual process and we do it as follows. For GNMT in \tensorflow, we use the technique of Wu~et~al. \cite{gnmt}. Each LSTM layer in the encoder and decoder modules are placed on different GPUs. The embedding layer is placed on the same GPU as the first LSTM layer. The output projection layer is placed on the same GPU as the last decoder LSTM layer. For Inception-V3 {in both \tensorflow and \pytorch}, the expert is the  single GPU placement, 
similar to {HierarchicalRL}~\cite{hierarchicalRL}.

For the Transformer model in \pytorch we use the common practice of putting the encoder on one device and the decoder on another device 
\cite{transformersplit1}.

\begin{table*}[t]
    \vspace{-0.1in}
    \centering

    \begin{tabular}{cccccccccccc}
        \toprule
        & & & & & & & & \multicolumn{4}{c}{Speedup over} \\
        \cmidrule{8-11}

        & & & & & & & &  \multicolumn{2}{c}{Single GPU} & \multicolumn{2}{c}{Expert (4 GPUs)} \\
        
        \cmidrule(l){8-9} \cmidrule(l){10-11}
        
        & Model & \multirow{-2}{*}{\shortstack{Batch \\ Size}} & \multirow{-2}{*}{\shortstack{Single \\ GPU}} & Expert & m-TOPO & m-ETF  & m-SCT & m-ETF & m-SCT & m-ETF & m-SCT \\
        
        \midrule
        
        & \multirow{2}{*} {Inception-V3}
            & 32 & 0.269 & 0.269 & 0.286 & 0.269 & 0.269 & \multicolumn{4}{c}{0.00\% (1 GPU Expert)} \\
            & & 64 & 0.491 & 0.491 & 0.521 & 0.491 & 0.491 & \multicolumn{4}{c}{0.00\% (1 GPU Expert)} \\

        &  \multirow{2}{*} {\shortstack{GNMT \\ (length: 40)}}
            & 128 & 0.251 \ignore{& 4} & 0.214 & 0.265 & 0.224 & 0.212 & 12.1\% & 18.4\% & -4.5\% & 0.9\% \\
            & & 256 & 0.474 \ignore{& 4} & 0.376 & 0.481 & 0.354 & 0.369 & 33.9\% & 28.5\% & 6.2\% & 1.9\% \\
            
         &  \multirow{2}{*} {\shortstack{GNMT \\ (length: 50)}}
            & 128 & 0.319 \ignore{& 4} & 0.259 & 0.348 & 0.264 & 0.267 & 20.9\% & 19.5\% & -1.9\% & -3.0\% \\
        {\rotatebox[origin=c]{90}{\rlap{\xspace \tensorflow}}} & & 256 & 0.618 \ignore{& 4} & 0.484 & 0.609 & 0.502 & 0.516 & 23.1\% & 19.8\% & -3.6\% & -6.2\% \\
            
        \midrule
        &  \multirow{2}{*} {Inception-V3}
            & 32 & 0.240 & 0.240 & 0.274 & 0.241 & 0.241 & \multicolumn{4}{c}{0.00\% (1 GPU Expert)} \\
            & & 64 & 0.461 & 0.461 & 0.537 & 0.465 & 0.462 & \multicolumn{4}{c}{0.00\% (1 GPU Expert)} \\
            
        &  \multirow{2}{*} {\shortstack{Transformer \\ (length: 50)}}
            & 64 & 0.249 & 0.257 & 0.262 & 0.242 & 0.244 & 2.9\% & 2.0\% & 6.2\% & 5.3\% \\
        {\rotatebox[origin=c]{90}{\rlap{\pytorch}}} & & 128 & 0.465 & 0.462 & 0.466 & 0.451 & 0.453 & 3.0\% & 2.6\% & 2.4\% & 2.0\% \\
            
        \bottomrule
    \end{tabular}
    \vspace{0.1in}
    \caption{
        \it {\bf \sysname with Sufficient Memory.} Average Step Times (Training) in seconds of Placed Model Graphs, and Speedup over Single GPU and Expert Placements. 4 GPUs (unless otherwise mentioned). }
    \label{tab:step_time}
\end{table*}

\medskip\noindent{\it \textbf{m-ETF, m-SCT --VS.-- Single GPU, Expert. }}
Table~\ref{tab:step_time} shows the step times for the three algorithms in \sysname--namely m-TOPO, m-ETF, and m-SCT---as well as the single GPU and expert. We show numbers for 2 batch sizes in each model, and 2 sequence lengths in GNMT. We observe that for Inception-V3: 1) in \tensorflow, m-ETF and m-SCT find the same device placements as the expert, i.e., place all operators in a single GPU. 2) In \pytorch  m-ETF and m-SCT placements use three and two GPUs respectively, but have the same step time as 1-GPU expert. 3) 
{Compared to the expert, m-TOPO step time's is higher by {6.1--6.3\%} 
in \tfname and by {14--17\%} in \pyname for Inception-V3. 
}
 This occurs  because m-TOPO  splits the neural network  between the Inception blocks, and hence the next inception block(s) are  unable to run until the previous block(s) finish.

In \tensorflow GNMT, first, compared to  single GPU placement, m-ETF's   placements have step times that are 12.1--33.9\% faster.  The step time speedups for m-SCT over single GPU are between 18.4--28.5\%. These observations show that \sysname's m-ETF and m-SCT are able to 
extract benefits of parallelism in spite of communication overheads. Second, in GNMT, compared to the expert, m-ETF is between 4.5\% slower and 6.2\% faster in step times.
Compared to the expert, m-SCT is between 6.2\%  slower and 1.9\% faster.  

\old{In \pytorch Transformer, m-SCT and m-ETF place only the decoder's embedding and first multi-head attention layer on a separate device. Since this computation is independent of the encoder, m-SCT and m-ETF are able to exploit the parallelism inherent in the model. Rest of the decoder requires encoder's output  and is hence placed by \sysname on the same device as the encoder (in contrast to the expert) to minimize communication. This placement thus is 2.0-6.0\% faster than single-GPU and expert placements.}

In \pytorch Transformer, m-SCT and m-ETF placements are 2.0-6.0\% faster than single-GPU and expert placements. They place only the decoder's embedding and first multi-head attention layer on a separate device. Since this computation is independent of the encoder, m-SCT and m-ETF exploit the parallelism in the model. The rest of the decoder requires output of the encoder and is hence placed on the same device as the encoder (in contrast to the expert) to minimize communication.

These observations show that \sysname's m-ETF and m-SCT are able to generate placements with step times  in the same ballpark as the expert, while taking significantly less time to create a placement than the manual expert which takes minutes to hours.

\medskip\noindent{\it \textbf{m-TOPO. }}
Table \ref{tab:step_time} also shows that, \tfname's m-TOPO is significantly slower than m-ETF and m-SCT. {m-TOPO's step times are 5.8\%--26.4\% slower than m-ETF and 5.8\%--23.3\% slower than m-SCT.} After analysing m-TOPO we found that it places most of the encoder's LSTM layers  at the first two GPUs, and most of the decoder LSTM layers at  the other two GPUs. However, this parallelization is offset negatively by the   
high data transfers between the kernel weight and the LSTM cell operators for LSTM layers. 
{ Similarly, with Transformer in \pyname, m-TOPO's step time is 3.3\%--8.2\% slower than m-ETF and m-SCT. Essentially m-TOPO fails to exploit the parallelism between the encoder and the decoder.}

\medskip\noindent{\it \textbf{m-SCT vs. m-ETF. }}
Theoretical analysis in \cite{sct} shows SCT beating ETF and one would expect the same with m-SCT and m-ETF. In practice, the reverse is true---Table \ref{tab:step_time} shows that m-ETF's step times are faster than m-SCT's for 5 out of 6 
settings in   \tfname (it is slower only under sequence length 40, batch size 128), {and faster or equal in 3 out of all 4 settings in \pyname (it is slower only under Inception-V3, batch size 64).}

This 
behavior of m-SCT is because of two reasons. 
First, SCT's optimality proof 
relies on the assumption that the minimum operator computation time is larger than or equal to the maximum communication time. This does not hold in our experimental machine---a 4 B GPU-GPU transfer takes 50--200 ms while, in \tensorflow, many operators execute within 1 ms, and 67\% of Inception-V3's operators take under 50 ms. 
The m-SCT LP model (Section~\ref{sec:sct_algo}) assumes parallel data transfers from an operator to all its children. Our experimental machine only allows sequential transfers (Section~\ref{sec:sequential_data_transfer})\footnote{Faster data transfers between GPUs, e.g., via NVLink~\cite{nvlink}, have the potential to make m-SCT more competitive than m-ETF, but this is outside our scope.}. Overall, m-SCT and m-ETF are comparable in practice, with m-ETF having a slight edge in both placement time and step time.

\begin{table}
    \centering
    \caption{\it {\bf \sysname with Insufficient Memory.} Average Step Times (Training) in seconds of Placed Model Graphs  (Parentheses show Slowdown compared to Sufficient Memory for the same algorithm).
    }
    \label{tab:step_time_insuff_mem}
    \begin{tabular}{cccccccccc}
        \toprule
         & Model & \shortstack{Batch \\ Size} & \shortstack{Memory \\ Fraction} & \shortstack{Single \\ GPU} & Expert & m-TOPO & m-ETF & m-SCT \\
         \midrule
         & \multirow{2}{*}{Inception-V3} & \multirow{2}{*}{32} & \multirow{2}{*}{0.3} &\multirow{2}{*}{OOM} & \multirow{2}{*}{OOM} & 
            \multirow{2}{*}{\shortstack{0.690 \\ (58.6\%)}} & \multirow{2}{*}{\shortstack{0.312 \\ (13.8\%)}} & \multirow{2}{*}{\shortstack{0.292 \\ (7.9\%)}} \\
         & & & & & \\
         {\rotatebox[origin=c]{90}{\rlap{\tensorflow} \hspace{1cm}}} & \multirow{2}{*}{GNMT} & \multirow{2}{*}{32} & \multirow{2}{*}{0.3} & \multirow{2}{*}{OOM} & \multirow{2}{*}{\shortstack{0.221 \\ (3.2\%)}} & 
            \multirow{2}{*}{\shortstack{0.272 \\ (2.6\%)}} & \multirow{2}{*}{\shortstack{0.230 \\ (2.6\%)}} & \multirow{2}{*}{\shortstack{0.212 \\ (0.0\%)}}\\
         & & & & & \\
         \midrule
         
          & \multirow{2}{*}{Inception-V3} & \multirow{2}{*}{32} & \multirow{2}{*}{0.3} & \multirow{2}{*}{OOM} & \multirow{2}{*}{OOM} & 
            \multirow{2}{*}{\shortstack{0.275 \\ (0.0\%)}} & \multirow{2}{*}{\shortstack{0.250 \\ (3.7\%)}} & \multirow{2}{*}{\shortstack{0.254 \\ (5.4\%)}} \\
            & & & & & \\
         & \multirow{2}{*}{Inception-V3} & \multirow{2}{*}{64} & \multirow{2}{*}{0.4} & \multirow{2}{*}{OOM} & \multirow{2}{*}{OOM} & 
            \multirow{2}{*}{\shortstack{0.537 \\ (0.0\%)}} & \multirow{2}{*}{\shortstack{0.527 \\ (13.3\%)}} & \multirow{2}{*}{\shortstack{0.535 \\ (16.1\%)}} \\
            & & & & & \\
        {\rotatebox[origin=c]{90}{\rlap{\pytorch}}} & \multirow{2}{*}{Transformer} & \multirow{2}{*}{64} & \multirow{2}{*}{0.3} & \multirow{2}{*}{OOM} & \multirow{2}{*}{0.257} & 
            \multirow{2}{*}{\shortstack{0.262 \\ (0.0\%)}} & \multirow{2}{*}{\shortstack{0.240 \\ (0.0\%)}} & \multirow{2}{*}{\shortstack{0.241 \\ (0.0\%)}} \\
            & & & & & \\

         \bottomrule
    \end{tabular}
\end{table}

\subsection{Placement with Insufficient Memory}
\label{eval:insufficient_memory}

Next, we limit per-GPU memory to a fraction of maximum available memory on the GPUs.  Table~\ref{tab:step_time_insuff_mem} shows  results for: 
{1) \sysname \tensorflow   with memory limited to 30\%, i.e., from 8 GB down to 2.4 GB, for:   Inception-V3 with batch size of 32, and GNMT with  batch size of 128 and  sequence length 40; and 2) \sysname \pytorch: memory limited to 30\% (Inception-V3 with  batch size of 32, Transformer with batch size 64) and 40\% memory limit (Inception-V3 with batch size 64).} 

A few notes follow on configuration changes in the experiments {with \tfname}. For GNMT,  co-placement (Section \ref{sec:co-placement}) remains enabled and we use the same configuration as Section~\ref{eval:sufficient_memory}. For Inception-V3 
we  disable  co-placement as otherwise it  generated a massive operator group, causing an Out of Memory error (OOM). 
Disabling  co-placement increases the number of operators to be placed from 2,620 to 7,077, and  placement time from 1 s to 10.3 s. No configuration changes were required for \pyname experiments.

\medskip\noindent{\it \textbf{Effect on Step Time. }}
Table~\ref{tab:step_time_insuff_mem} shows that the single GPU placer always suffers an OOM (Out of Memory) error. The expert placer OOMs for Inception-V3 (in both \tensorflow and \pytorch), but succeeds for \tensorflow GNMT and \pytorch Transformer. In comparison, all three variants of \sysname (m-TOPO, m-ETF, m-SCT) succeed in placing under insufficient memory under all five settings.

For Inception-V3 {in both \tfname and \pyname},  
only \sysname succeeds in placement. 
{Compared to the  sufficient memory cases  (Table~\ref{tab:step_time}), m-ETF and m-SCT provide step times that are only 13.8\% and 7.9\% worse in \tfname and,  
13.3\% and 16.1\% worse in \pyname respectively. }
m-TOPO in \tensorflow degrades by 58.6\% because of its  disabled   co-placement, which ballooned communication along the graph's critical path. In \pytorch there is no change in m-TOPO since the algorithm does not depend on the maximum limit as long as it is more than m-TOPO's per device cap.

For GNMT and Transformer, the overheads of all three \sysname algorithms and the expert are small 
{(shown as \% numbers within parentheses), meaning that with insufficient memory \sysname{} is nearly as fast as when memory is sufficient.} 

\begin{figure}
    \centering
    \begin{subfigure}[t]{0.5\textwidth}
        \centering
        \caption{}
        \label{fig:per_dev_memory_usage_tf}
        \includegraphics[width=1\linewidth]{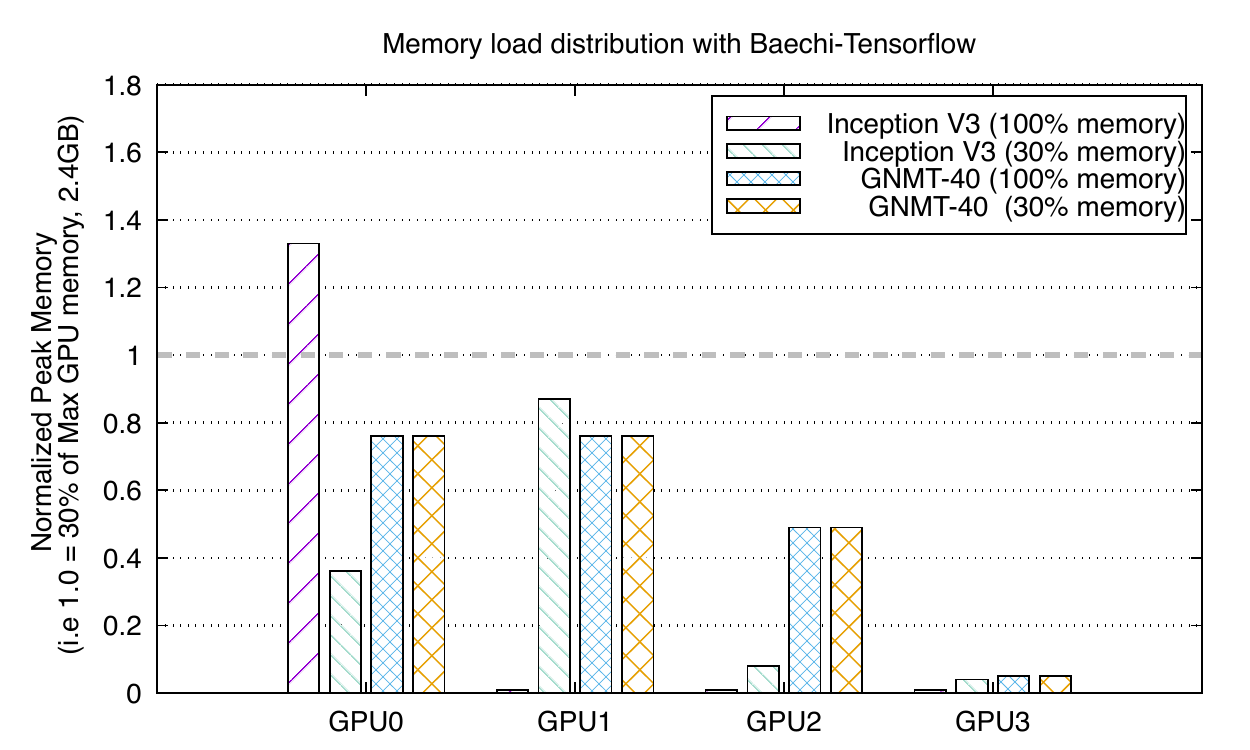}
    \end{subfigure}\hfill
    \begin{subfigure}[t]{0.5\textwidth}
        \centering
        \caption{}
        \label{fig:per_dev_memory_usage_py}
        \includegraphics[width=1\linewidth]{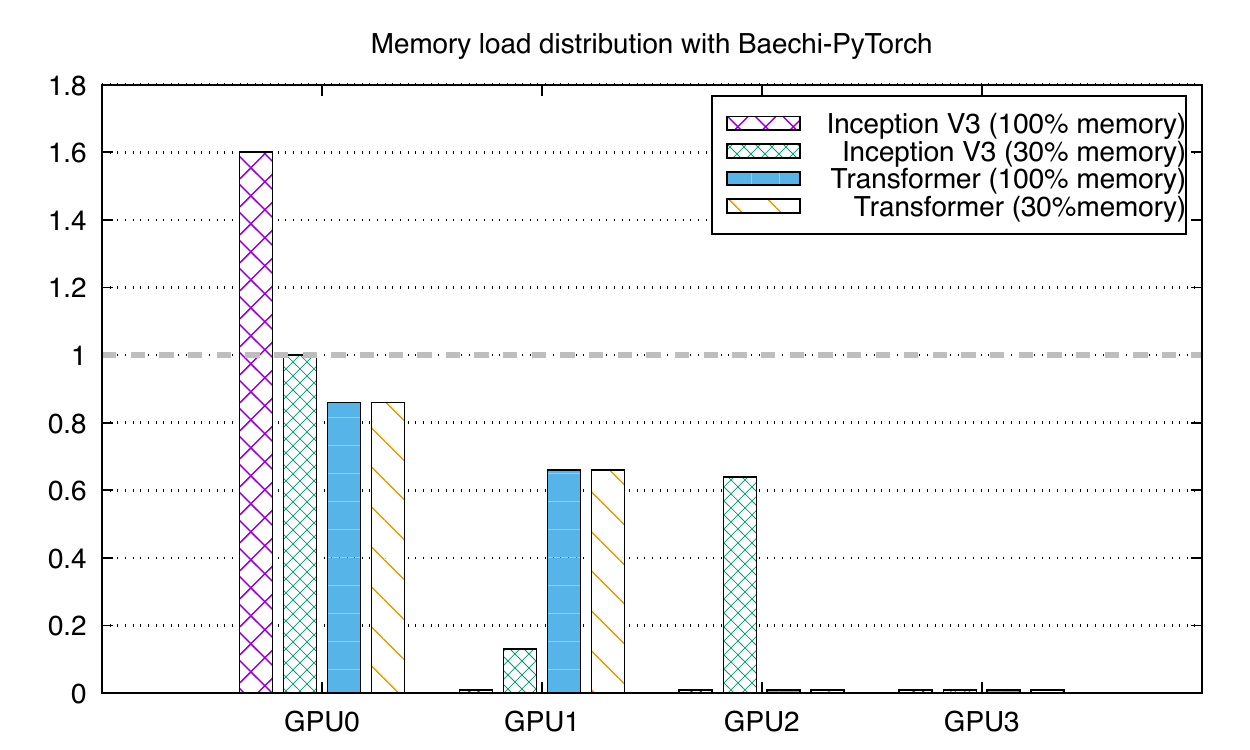}
    \end{subfigure}
    \caption{\it {\bf \sysname Load Balance of Memory Usage using m-SCT.} Dashed line is memory limit for each GPU (normalized). Note that 1.0 on y axis corresponds to 30\% of the max GPU memory (i.e. 2.4 GB in a 8 GB GPU) }
    \label{fig:per_dev_memory_usage}
\end{figure}

\medskip\noindent{\it \textbf{Load Distribution. }}
Figure~\ref{fig:per_dev_memory_usage} shows the peak memory usage, normalized to the memory limit for each GPU (insufficient memory case)
{for \tfname and \pyname.
In both \tfname and \pyname,} for Inception-V3,  

with a 30\% memory cap, a single GPU does not suffice, and that m-SCT relies on a mix of multiple GPUs. In particular, 2 of the 4 GPUs appear to be used more. This is because Inception-V3 has more barriers (sync points) than GNMT {in \tensorflow}, limiting Inception-V3's ability to parallelize effectively.

For \tensorflow GNMT and \pytorch Transformer, \sysname's m-SCT is able to load-balance more evenly (than Inception-V3)  across the GPUs, even when memory is sufficient.
In fact, {for both these cases}  we found that m-SCT generates an identical placement in both cases with sufficient and with insufficient memory. 
{This fact is also true for the expert, m-TOPO, and m-ETF. However specifically in case of GNMT with \tfname, their step times are 2.6--3.2\% {higher} than the sufficient memory cases (Table~\ref{tab:step_time}).}This slowdown is because of TensorFlow runtime memory optimizations.  
Concretely, when the memory usage approaches its limit, the TensorFlow runtime resorts to certain memory optimizations to decrease  peak memory usage. For the expert placement,  peak memory usage for one GPU device decreases from 2 GB (83\% of the memory limit) to 1.45 GB and thus the number of memory operations increases 6\% under insufficient memory. These memory optimizations do not kick in for m-SCT, making it faster than the expert.

\begin{figure}
    \centering
    \begin{subfigure}[t]{0.5\textwidth}
        \centering
        \caption{}
        \label{fig:sensitivity_tf}
        \includegraphics[width=1\linewidth]{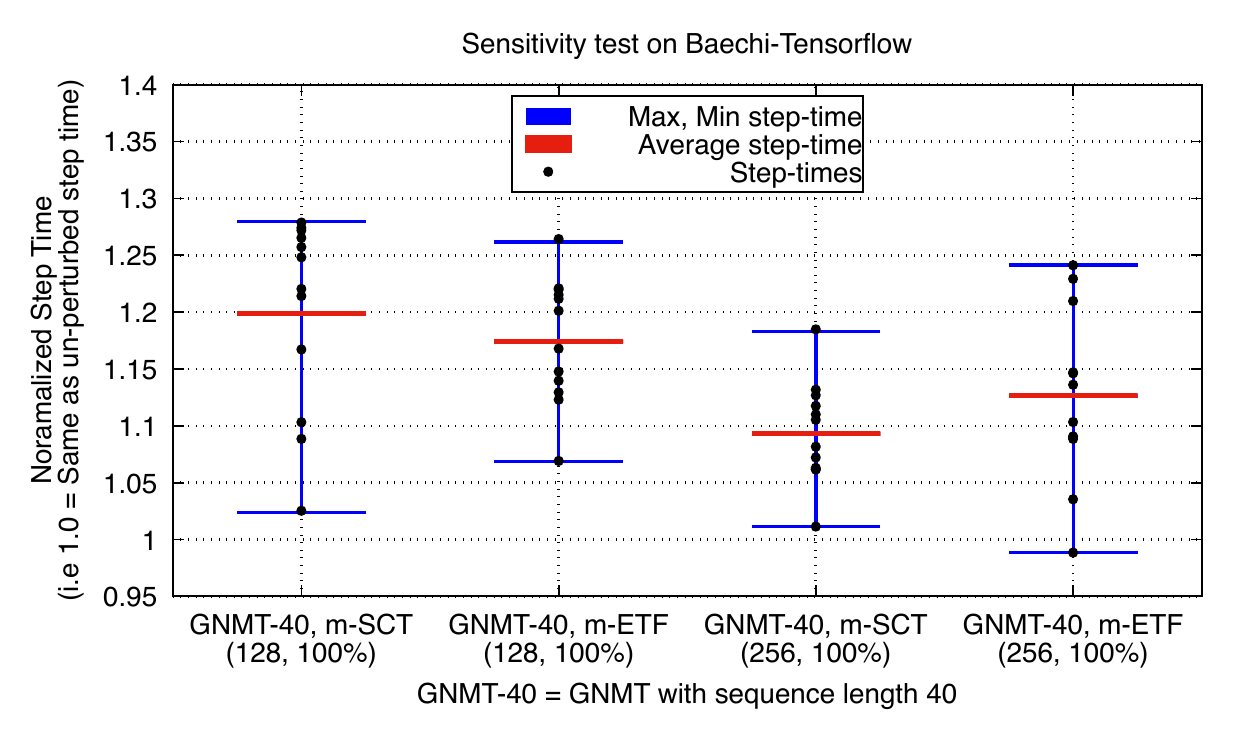}
    \end{subfigure}\hfill
    \begin{subfigure}[t]{0.5\textwidth}
        \centering
        \caption{}
        \label{fig:sensitivity_py}
        \includegraphics[width=1\linewidth]{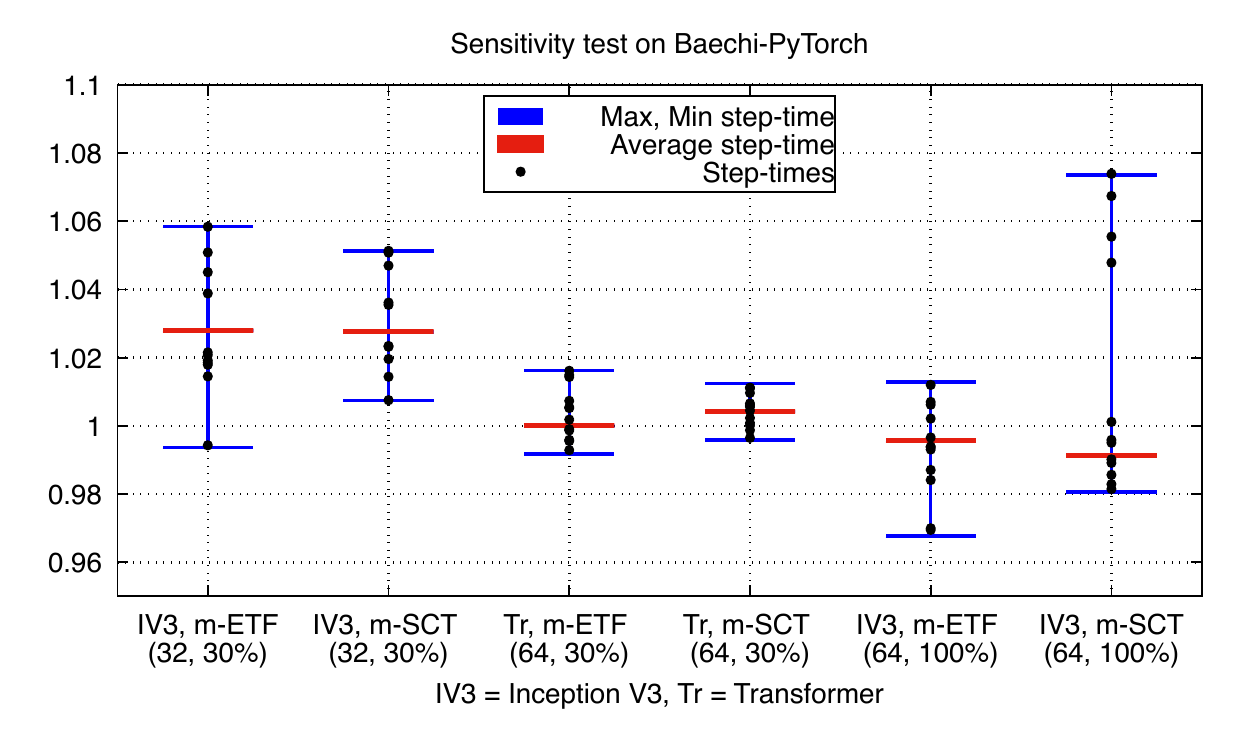}
    \end{subfigure}
    \caption{\it {\bf \sysname Sensitivity to Profiling Errors.} All computation and communication times are perturbed randomly by up to 20\% and the step time for placement generated by \sysname is measured. Values in X-axis parentheses are (Batch Size, Memory Fraction Available) 
    }
    \label{fig:sensitivity}
\end{figure}

\medskip\noindent{\it \textbf{Profile Sensitivity. }}
To measure \sysname's sensitivity to profiling errors, we perform runs where in each run all computation and communication profiles are randomly and independently perturbed by up to $\pm$20\%. This should account for errors in our time measurements as well as small speed differences between the device used to profile the model and devices where the model actually runs. Figure~\ref{fig:sensitivity} shows the perturbation in step-times of the resulting placements, w.r.t. to step-time without any perturbation of the profiles. 
{Compared to the unperturbed step times, the step times with perturbed profiles remain within  a fraction of 0.99$\times$ to 1.3$\times$ in \tfname, and between 0.97$\times$ and 1.08$\times$ times in \pyname} 
Thus we conclude that m-SCT and m-ETF are resilient to reasonable levels of errors in profiled values.

\subsection{Benefit of \sysname Optimizations}
\label{sec:eval:opt}

\subsubsection{Benefit of \tfname Optimizations. }

Table~\ref{tab:optimization_results} shows the benefit from the combined optimizations of Section \ref{sec:co-placement} and \ref{sec:operator_fusion} in \tfname. We use Inception-V3 with batch size 32 and GNMT with batch size of 128. 
We use the m-SCT variant of \sysname. The experimental setup has 4 GPUs with sufficient memory.

\begin{table}[tb]
  \caption{
    \it {\bf Benefits of \tfname Optimizations (Section~\ref{sec:tf_colocation}).}  Number of Operators to be Placed, Placement Times in seconds, and Average Step Times in seconds. m-SCT.
  }
  \label{tab:optimization_results}
  \centering
  \begin{tabular}{ccccccc}
    \toprule
    \multirow{3}{*}{Model} & \multicolumn{3}{c}{Un-Optimized} & \multicolumn{3}{c}{Optimized} \\
    \cmidrule(r){2-4}
    \cmidrule(r){5-7}
     & \shortstack{Num. \\ Ops} & \shortstack{Placement \\ (seconds)} & \shortstack{Step \\ (seconds)} & \shortstack{Num. \\ Ops} & \shortstack{Placement \\ (seconds)} & \shortstack{Step \\ (seconds)} \\
    \midrule
    Inception-V3 & 6884 & 68.0 & 0.302 & 17 & 0.9 & 0.269 \\
    \multirow{2}{*}{\shortstack{GNMT \\ (length: 40)}} & \multirow{2}{*}{18050} & \multirow{2}{*}{275.1} & \multirow{2}{*}{0.580} & \multirow{2}{*}{542} & \multirow{2}{*}{1.2} & \multirow{2}{*}{0.212} \\
    & & & & & & \\
    \multirow{2}{*}{\shortstack{GNMT \\ (length: 50)}} & \multirow{2}{*}{22340} & \multirow{2}{*}{ 406.1} & \multirow{2}{*}{0.793} & \multirow{2}{*}{706} & \multirow{2}{*}{2.4} & \multirow{2}{*}{0.267} \\
    & & & & & & \\
    \bottomrule
  \end{tabular}
\end{table}

\begin{table*}[t]
    \caption{
        \it {\bf Benefits of communication protocol in \pyname{} (Section \ref{sec:py_comm}). } Step times in seconds without and with the protocol  }
    \label{tab:optimization_results_py}
    \centering
    \begin{tabular}{ccccc}
        \toprule
        
        Model & Algorithm & Without Protocol 
        & With Protocol & \% Change\\
        \midrule
        
        \multirow{2}{*} {\shortstack{Inception V3 (32, 30\% memory)}}
            & m-ETF & 0.252 & 0.250 & 0.0\% \\
            & m-SCT & 0.268 & 0.254 & 5.5\% \\
        \multirow{2}{*} {\shortstack{Inception V3 (64, 40\% memory)}}
            & m-ETF & 0.551 & 0.528 &  4.3\%\\
            & m-SCT & 0.550 & 0.535 &  2.8\%\\
        \multirow{2}{*} {\shortstack{Transformer (64, 100\% memory)}}
            & m-ETF & 0.246 & 0.242 &  0.0\%\\
            & m-SCT & 0.246 & 0.244 &  0.0\%\\

        \bottomrule
    \end{tabular}
    \vspace{-0.1in}
\end{table*}

Overall, we observe that \tfname's combined optimizations achieve   75.6$\times$--229.3$\times$ speedup in  placement times, and 1.1$\times$--3.0$\times$ speedup in  step times. We discuss a few interesting aspects.  
Operator fusion (Section~\ref{sec:operator_fusion}) reduces both number of operators to be placed and thus also placement time. Forward-operator-based placement (Section \ref{sec:forward_operator_based_placement}) significantly speeds up placement. Concretely the latter optimization reduces the number of operators to be placed 2.7$\times$ for Inception-V3 and 6.5$\times$--7.0$\times$ for GNMT. This accelerates the placement times 13.7$\times$ for Inception-V3 and 20.2$\times$--31.4$\times$ for GNMT.

Co-placement (Section~\ref{sec:co-placement}) is efficient because it clusters operators. This  reduces step times. While  co-placement does not change the  operator count to be placed, it decreases placement time by reducing the overhead of calculating schedulable times.

\subsubsection{Benefit of \pyname  Communication Protocol.}
To evaluate the communication protocol in \pyname (Section~\ref{sec:py_comm}), we create a baseline plain wrapper. In it, each node transfers the inputs from devices of its parent (if different from module's device) by simply using blocking calls to \code{.to()} instead of using streams. Table \ref{tab:optimization_results_py} shows the comparison of step times. 
{\pyname's communication protocol gives up to 5.5\% benefit with Inception-V3, and very little benefit under Transformer. This is because, in \pytorch, both these models have a strong linear spine, which creates fewer opportunities for parallelism.}

\section{Discussion and Limitations} \label{discussion}
\noindent{\textbf{Algorithmic Approaches vs Learning-based Approaches. }}

When we first implemented m-ETF and m-SCT, the placed models had very high step times because communication-intensive operators violated the SCT assumption (Table~\ref{fig:terminology}). We whittled away at this with a persistent effort at systems design and optimizations (outlined in Section~\ref{sec:design}), which played a major role in bringing the step times down. Although our exploration was efficient 
and we cycled new techniques and optimizations on a weekly basis, it took 1 person-year of effort to converge to what now appears in this paper for \tfname, and an additional 1 person-year for \pyname . This is indicative of the difficulties associated with implementing scheduling algorithms on today's open-source ML systems (and in a sense shows why existing learning-based approaches are so attractive!).  Nevertheless, our results show that the benefits   of algorithmic design were worth the exploratory pain.

{Our  experience also indicates reasons why developers (and companies!) often choose to ``jump'' so quickly towards  using learning-based (including RL-based) solutions for  scheduling problems: fast time to design (optimization of parameters and hyperparameters can be often be done as a rote task, rather than a creative task), and hence fast time to production. However, this comes at the expense of latter pain points in generalizing learning-based approaches to different architectures and models (in comparison, \sysname runs as-is, given an arbitrary model and a machine profile), as well as the high times to generate a placement using  learning-based approaches, which becomes a bottleneck in the exploratory design phase when the developer is iteratively  building and revising their model. We conclude that learning-based techniques (for any problem) should not be built in isolation from, or in lieu of, algorithmic-based approaches---but rather hand-in-hand with them.
}

\medskip \noindent{\textbf{
Limitations of \tfname
}.}
The peak memory usage of \tensorflow is highly dependent on the execution order of operators~\cite{ordering_chaos}. So \sysname would  benefit most if the framework (\tensorflow or \pytorch)  faithfully executed  operators in the same order as specified by \sysname's ES. For \pyname we enforce this via the ``Reordering Problem'' (Section~\ref{secmisc}). For \tensorflow while we do not enforce this ordering, and we observed in several runs of \tfname that \tensorflow deviated from this order,  yet  memory caps were not violated for m-SCT and m-ETF in \tfname runs. It is possible that if memory caps were tightened further (than 30\%, compared to our experiments),  engineering may be required for \tfname to force \tensorflow to follow the ES execution order.

\medskip \noindent{\textbf{Limitations of \pyname}.}

{\it \textbf{(i) Correctness issues with in-place operations}}: Inplace operations may lead to race conditions and incorrectness.
Concretely, select modules in \pytorch can be made to modify the input tensors in-place, e.g., ReLU with in-place flag set. \pyname's communication protocol (in Section \ref{sec:py_comm}) uses an independent \code{tx} stream to move out the tensors from a device. If the subsequent module in the compute stream is in-place, it may modify the tensor while it is being transferred out. This may  lead to an incorrect tensor. While we did not encounter such a cases in evaluation, a simple fix is to turn off the in-place feature for the module in question. This may however increase the memory consumption.

{\it \textbf{(ii) Weight sharing in Transformers}}: Currently the Assigner (Section~\ref{assigner}) does not support cases where weights are shared across multiple modules in the model (e.g., Transformers with embedder weight sharing \cite{transformer}).

{\it \textbf{(iii) Model code modifications}}: Some operations like concatenate and add, which take multiple inputs, must be defined as \pytorch modules. Only then can the Assigner ensure that tensors being concatenated or added are on the same device. In most cases, this is a few lines of code change. For instance, in Inception-V3, concatenate is used 7 times and add is used only once.

\section{Related work}

\noindent{\it \textbf{Data Parallelism (DP). }}
{Data parallelism (a.k.a. DP) refers to training the same model replicas with multiple partitioned data in parallel.} 
This is motivated by increasing sizes of datasets. 
MALT \cite{data_parallel_1} is a fault-tolerant, network-cost effective solution for data parallel ML. Another common data parallelism framework is NESL \cite{data_parallel_2}, a first-order functional language that enables developers to put irregular-parallel program in parallel devices. OptiML \cite{data_parallel_3} is a domain-specific language (DSL). Most major ML frameworks offer support for data parallelism~\cite{distributed_ml_3,tensorflow,pytorch}. 
{While DP typically replicates the model on each device, ZeRO \cite{zero} eliminates this redundancy and reduces memory consumption in DP.} {DP is orthogonal to model parallelism, and therefore DP techniques can be integrated into \sysname.}

\medskip\noindent{\it \textbf{Model Parallelism. }}
Compared 
to data parallelism, relatively fewer solutions exist for model parallelism. 
\igr{DistBelief~\cite{dean2012large} and STRADS~\cite{strads}} require the user to manually specify device placement, while the systems in \cite{krizhevsky2014one, le2013building} do not generalize to arbitrary ML models.

As discussed in Section~\ref{introduction}, reinforcement-learning based approaches have been popular lately to perform placement for model parallelism, including work from Google~\cite{colocRL, hierarchicalRL} and the Placeto system~\cite{placeto}. {ColocRL}~\cite{colocRL} trains a sequence-to-sequence model by RL to generate  placements of manually grouped subsets of TensorFlow operators. 
{HierarchicalRL}~\cite{hierarchicalRL} 
substitutes the human intervention for grouping operators with an ML model and jointly trains the ML models for operator grouping and device placements.
Placeto~\cite{placeto} proposes an approach that transfers learned device placement models to new ML models in order to minimize training times for the new model placements.

The original version of our paper \cite{baechiconf} both inspired  follow-up work \cite{pesto}, and also had parallel work \cite{amarpaper}, on algorithmic approaches to model parallelism. However \cite{amarpaper} is standalone, meaning that it is not integrated with TensorFlow or PyTorch, making a fair comparison with \sysname difficult.  
Pesto~\cite{pesto} presents direct comparisons with \sysname (\tfname)---the most important metric of placement times are similar for Pesto and \sysname, with small improvements for step time. Thus for practical purposes we consider Pesto to be comparable in performance to \sysname.

Works like PipeDream \cite{pipedream}, GPipe \cite{gpipe}, DAPPLE \cite{dapple}, PipeMare \cite{pipemare} introduce and optimize various aspects of Pipeline Parallelism. In Pipeline Parallelism, the model is usually vertically split into contiguous stages. 
{Amazon Sagemaker recently introduced automating Model and Pipeline Parallelism on their platform recently, though their code is proprietary\cite{sagemakerMP, sagemakerpaper}.} {Techniques for pipeline parallelism can be integrated orthogonally into \sysname.}

{
\medskip\noindent{\it \textbf{Model Parallelism for large language models. }}
Recent progress on very large language models like GPT \cite{gpt, gpt3} have given rise to works that focus specifically on Model Parallelism for such models like Megatron-LM \cite{megatron} and TeraPipe \cite{terapipe}. {However, these systems focus narrowly on Transformer models.}   
Alpa \cite{alpa1} combines intra and inter-operator parallelism. The inter-operator parallelism is limited to vertical splits and will not, for instance, place multiple parallel branches of a model on different devices, thus making it different from model parallelism.
}

\medskip\noindent{\it \textbf{Classical Parallel Scheduling. }}
Classical parallel scheduling, e.g.,  ETF~\cite{etf} and SCT~\cite{sct}, has been widely used in task scheduling on multiple computers. ETF  and  SCT are used as baselines by many subsequent works \cite{eyraud2015parallel,5715067,mohring1996scheduling,algo_lit_review,yang1994dsc}. 
{None of these address memory constraints and a finite number of devices. 
For instance, \bjc{Eyraud-Dubois~et~al.~\cite{eyraud2015parallel}} investigate the execution of tree-shaped task graphs using multiple processors,
but without always obeying memory restrictions. }

\medskip\noindent{\it \textbf{TensorFlow Graph Optimizations. }}
Existing techniques~\cite{tf_graph_optmization, tf_xla} work only {\it after} the graph has been placed---e.g., to improve operations' performance---and thus are inapplicable. E.g., Running Grappler (TensorFlow's graph optimizer) generates {an optimized graph protobuf}, but it is unusable as it lacks certain metadata. \sysname's targeted problem is harder as we have to both optimize the graph 
and do placement.

\section{Conclusions} \label{conclusion}

We have proposed algorithmic  solutions to model parallelism, useful in scenarios where devices are memory-constrained or neural networks are massive. Among our three algorithms (m-ETF, m-TOPO, m-SCT), the m-SCT algorithm is provably within a constant factor of the optimal achievable training time. We have implemented these algorithms into our new \sysname system, as two systems \tfname and \pyname which are respectively usable in a modular manner with \tensorflow and \pytorch.

Experimental results 
showed that, across \tensorflow and \pytorch, our approaches reduce placement time by a factor of between {$654\times$--$206000 \times$} compared to today's state-of-the-art placement approaches which  are learning-based, while  increasing step time (makespan) by only up to {6.2\%} compared to expert placers. 
When memory is constrained further, while single GPU and expert placers suffer OOM errors, \sysname's algorithms, especially m-SCT and m-ETF, were able to place successfully. {Compared to sufficient memory the step times suffered an increase of only up to 13.8\% in \tensorflow and 16.1\% in \pytorch.
Further, \tfname's} optimizations help reduce {placement} time by {$75.6\times$--$229.3\times$}, 
and step time by {$1.1\times$--$3.0\times$}. 
We also conclude that m-SCT and m-ETF perform comparably, with m-ETF having a slight edge for slower networks.

{
The original version of our paper \cite{baechiconf} inspired  follow-up work \cite{pesto} along with  parallel work \cite{amarpaper}, on algorithmic approaches to model parallelism. Together, this new generation of algorithms for model parallelism offers the 
promise of speed, generalizability, predictability, and analyzability. These will be invaluable as learning models, both training and inference, move closer to edge devices and human-facing devices. 
}

\medskip\noindent{\it \textbf{Code. }} \sysname's code is openly available at the following link:\\ 
\url{http://dprg.cs.uiuc.edu/downloads.php}

\begin{acks}
\igr{This work was supported in part by the following grants: NSF IIS 1909577, and NSF CNS 1908888; as well as by generous gifts from Capital One, Schlumberger, and Microsoft. We thank Xiaojuan Ma for her invaluable help in reviewing the proofs in the appendix. 

}
\end{acks}

\bibliographystyle{ACM-Reference-Format}
\bibliography{paper}

\appendix
\section{Optimality Analysis for m-ETF}
\label{appendix:etf}

We now derive an upper bound on the makespan of graph $G$ running on $n$ memory-constrained devices according to a placement generated by m-ETF. A bound for ETF (with no memory constraint on the devices) was obtained in \cite{etf}. We extend it to the case where the devices have finite memory. For any given $p$ devices, let $\omega_{\textrm{m-etf}}^{p}$ be the m-ETF makespan and $\omega_{\textrm{opt}}^{p}$ be the optimal makespan achieved using devices with infinte memory and zero device-to-device communication costs. 

\begin{theorem}
The makespan of m-ETF, $\omega_{\textrm{m-etf}}^{n}$ is at most $(2+\rho)\omega_{\textrm{opt}}^{R}$, where $R$ is an integer $<n$ (computed in equation \ref{eqn:Rcompute}).
\end{theorem}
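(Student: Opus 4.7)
My proof plan is to adapt the classical ETF bound of \cite{etf} in two stages: first pinpoint how memory constraints degrade the effective parallelism m-ETF can exploit, and then reuse the chain-tracing argument from the unconstrained proof on this reduced effective cluster. The key quantity is $R$, which should capture the ``effective'' number of devices once memory is accounted for. Concretely, I would define $R$ via the memory-filling procedure sketched in the overview after the main theorems (Results 1 and 2): walk through devices in order, greedily committing memory $d_i$ until the next operator no longer fits, then move to the next device. The resulting $R$ is an integer $\le n$ indicating how many devices can plausibly host operators. Equation (\ref{eqn:Rcompute}) (referenced but not yet shown in the excerpt) would formalize exactly this count.

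Next, I would set up the standard ETF timeline argument: partition the makespan of m-ETF into \emph{busy} intervals (at least one device executing) and \emph{idle} intervals (no device executing, all waiting on dependencies), and then within a busy interval distinguish devices that are (a) executing, (b) memory-exhausted, or (c) idle awaiting predecessors or their communication. The classical proof trails a chain backwards from the last finishing task through its ``blocking'' predecessor at each idle moment, showing that this chain is a path in $G$ whose computation$+$communication length bounds the total idle time by $(1+\rho)$ times the critical path. The new wrinkle is category (b): a device that is memory-full contributes no work but cannot be blamed on a predecessor chain. The remedy I would pursue is to show that memory-full devices can be removed from consideration at that moment without affecting m-ETF's choices, so the ``active'' device count at any busy instant is at least $R$ (by definition of $R$).

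Putting the two ingredients together, I would argue:
\begin{equation*}
\wmsct[\text{etf}]^{\,n} \;\le\; \frac{W}{R} + (1+\rho)\,\text{CP}(G),
\end{equation*}
where $W$ is total computation work and $\text{CP}(G)$ is the critical-path length under the SCT assumption. Both $W/R$ and $\text{CP}(G)$ are lower bounds on $\omega_{\textrm{opt}}^{R}$ (the second because even with infinite memory and zero communication the critical path must be traversed). Combining, $\wmsct[\text{etf}]^{\,n} \le (1 + (1+\rho))\,\omega_{\textrm{opt}}^{R} = (2+\rho)\,\omega_{\textrm{opt}}^{R}$, which is the desired bound. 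The $-1/m$ slack of the original ETF analysis is absorbed into the looser bound here because we allow the comparison against the ``nicer'' optimum on $R$ devices rather than $n$.

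The main obstacle will be the category-(b) argument: justifying that the per-instant ``active'' device count is at least $R$ requires relating the greedy filling order used to define $R$ with the arbitrary runtime ordering produced by m-ETF. A clean way is probably to show that at any time $t$ in the m-ETF schedule, the set of devices holding at least one operator dominates (memory-wise) a prefix in the defining order of $R$, hence at least $R$ devices are ever used. A secondary difficulty is handling the communication term correctly when a predecessor cannot be colocated with its child due to memory, since this can \emph{lengthen} the blocking chain; I would address this by charging each extra communication edge to the $\rho$ factor, exactly as in the unconstrained ETF argument, since the SCT assumption bounds each communication by the minimum computation time.
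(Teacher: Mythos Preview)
Your overall strategy---reduce to an effective device count and rerun the classical ETF chain argument---is the same as the paper's, and your final inequality $W/R + (1+\rho)\,\mathrm{CP}(G) \le (2+\rho)\,\omega_{\mathrm{opt}}^{R}$ is exactly what the paper obtains. Two points are worth correcting, though. First, your timeline partition is stated backwards: the relevant split is not ``some device executing'' versus ``no device executing'' (the latter never occurs before completion), but rather times when \emph{all} memory-sufficient devices are busy versus times when \emph{at least one} memory-sufficient device is idle; the chain from the classical lemma covers the total length of the second set. Second, the obstacle you flag---arguing that the per-instant active device count is at least $R$ and relating the greedy filling order to m-ETF's runtime order---is one the paper simply sidesteps. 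Because a device, once declared memory-insufficient, is never reinstated, the set of memory-sufficient devices is monotone decreasing; the paper just lets $r$ be the number that survive to the end, restricts attention to those $r$ devices for the entire interval $[0,\omega_{\text{m-etf}}]$, and applies the chain lemma to their idle time (which is a subset of the global ``some MS device idle'' set). The lower bound $r \ge R$ then follows from a one-line counting argument on total memory, with no need to match orderings. Adopting this fixed-subset framing dissolves your ``main obstacle'' entirely.
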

\begin{proof}
Let $K = \displaystyle \frac{n\cdot M}{\sum_{i=1}^{m} d_i}$, where for any {operator (task)} $i$ in $G$, $d_i$ is the size of memory required by $i$. Intuitively, $K$ is the ratio of the total memory available from all devices to the total memory required by the model. Thus $K > 1$, and for practical purposes we can assume that $K$ is sufficiently larger than 1. At each step, m-ETF greedily matches a ready task to an available device. Specifically, a device is said to be {\it available} if there is neither a task currently running nor has been scheduled to run on that device. A task is said to be {\it ready} if all of its predecessors have completed. Let $I$ and $A$ be the set of available devices and ready tasks at a given step respectively. 
When a task completes, $I$ is updated to include the recently free device and all of the task's children that are ready are added to $A$.

At each such step, a device $d$ is said to be {\it memory-sufficient (MS)} if the remaining free memory on $d$ is greater than the memory requirement of {\it each} task in $A$. If $d$ has insufficient memory for even a single task in $A$, we say $d$ is not MS thereafter. It is removed from $I$ and is not considered for any further placement. 

The time $(0, \omega_{\textrm{m-etf}})$ can be partitioned into two distinct sets. Set $\mathcal{A}$ containing the time-periods when all the MS devices (in that time-period) are busy and set $\mathcal{B}$ when at least one MS is idle. Suppose $\mathcal{B}$ is the disjoint union of intervals $(b_{li}, b_{ri})$ i.e,
$$
\mathcal{B}=(b_{l1}, b_{r1}) \cup (b_{l2}, b_{r2}) \cup \dots \cup (b_{lq}, b_{rq})
$$
where $b_{l1}<b_{r1}<b_{l2}<b_{r2}\dots<b_{lq}<b_{rq}$.

\begin{lemma*}[Theorem 3.2 in \cite{etf}]
We can find a chain of tasks,
$$X:T_{l}\rightarrow T_{l-1}\rightarrow \dots \rightarrow T_{1}$$
such that
$$\sum_{i=1}^{q} (b_{r1}-b_{l1}) \leq \sum_{j=1}^{l}t(T_{j}) + \sum_{j=1}^{l-1}c_{j(j+1)}$$
\label{lemma:B_chain}
\end{lemma*}
That is, the total time period of $\mathcal{B}$ will be covered by computation and communication times along the chain $X$. We will denote $\sum_{j=1}^{l-1}c_{j(j+1)}$ by $C_{X}$. For proof, we refer the reader to Theorem 3.2 in \cite{etf}.

Let $r$ be the number of devices that remain MS until the end of m-ETF (i.e at time $\omega_{\textrm{m-etf}}$). With $K>1$, we will have $r>=1$. Let $\hat{\mathcal{B}}$ be the set of all time-periods when atleast one of these $r$ devices is idle. Note that $\hat{\mathcal{B}} \subseteq \mathcal{B}$, thus the chain $X$ from \ref{lemma:B_chain} will cover $\hat{\mathcal{B}}$ as well. Thus we have:
\begin{equation}
    \sum_{r}t(\phi_{i}) \; \leq \; r\times\sum_{ T_{j} \in X }t(T_{j}) \: + \: r\times C_{X}
    \label{eqn:a}
\end{equation}
where $\phi_{i}$ is the set of times when the device $d_{i}$ is idle in $(0, \omega_{\textrm{m-etf}})$ and $t(T)$ is computation time of task $T$.

Let $\omega_{\textrm{opt}}^{p}$ be the optimal makespan on $p$ devices with no memory limits and no communication costs. 
Since the makespan on any number of devices is at least 
as large as a chain in the graph, we have
\begin{equation}
    \omega_{\textrm{opt}}^{r} \; \geq \; \sum_{X}t(T_{j}) \dots (i), \qquad \omega_{\textrm{opt}}^{n} \; \geq \;\sum_{X}t(T_{j}) \dots (ii)
    \label{eqn:b}
\end{equation}

Also, the net computation time of $G$ can be bounded as:
\begin{equation}
    \sum_{T_{j} \in G}t(T_{j}) \; \leq \; r \times  \omega_{\textrm{opt}}^{r} \dots (i), \qquad \sum_{T_{j} \in G}t(T_{j}) \; \leq \; n \times  \omega_{\textrm{opt}}^{n} \dots (ii)
    \label{eqn:e}
\end{equation}

Now we bound the m-ETF makespan. Consider the $r$ devices, their idle time and the jobs running on them:

\begin{align}
    \omega_{\textrm{m-etf}}^{n} \; &= \; \frac{1}{r} \left( \sum_{r}t(T_{j}) +  \sum_{r}t(\phi_{i})  \right) \\
    &\leq \;  \frac{1}{r} \left( \sum_{G}t(T_{j}) +  \sum_{r}t(\phi_{i}) \right)
    \label{eqn:f}
\end{align}

Using \ref{eqn:a}, \ref{eqn:b}(i) and \ref{eqn:e}(i), 

\begin{align*}
    \omega_{\textrm{m-etf}}^{n} \; &\leq \;  \frac{1}{r} \left( 2r\times \omega_{\textrm{opt}}^{r} + r\times C_{X} \right)
    = \; 2\omega_{\textrm{opt}}^{r} + C_{X}
\end{align*}
    
Similarly, using \ref{eqn:a}, \ref{eqn:b}(ii) and \ref{eqn:e}(ii), 

\begin{align*}
     \omega_{\textrm{m-etf}}^{n} \; &\leq \;  \frac{1}{r} \left( (n+r)\times \omega_{\textrm{opt}}^{n} + r\times C_{X} \right)
      = \; \left( \frac{n+r}{r}\right)\omega_{\textrm{opt}}^{n} + C_{X}
\end{align*}

Note that $r$ will vary depending on the exact topological order considered for the m-ETF. So we define $R$ as the minimum $r$ across all possible topological ordering of the graph. Thus we have,

\begin{equation}
\omega_{\textrm{m-etf}}^{n} \; \leq \; \min\left( 2\omega_{\textrm{opt}}^{R} , \;  \left( \frac{n+R}{R}\right)\omega_{\textrm{opt}}^{n} \right) + C_{X}
\label{eqn:final_result}
\end{equation}

Further, with $\rho$ as the ratio between maximum communication ttime and minimum computation time, we have $C_{X} \leq \rho \omega_{\textrm{opt}}^{n} $ (also $C_{X} \leq \rho \omega_{\textrm{opt}}^{R} $ )

Thus we have,
\begin{equation}
\boxed{\omega_{\textrm{m-etf}}^{n} \; \leq \;\left( 1+ \frac{n}{R} + \rho\right)\omega_{\textrm{opt}}^{n} }
\label{eqn:final1}
\end{equation}

Alternatively, using the bound with $\omega_{\textrm{opt}}^{R}$,
\begin{equation}
\boxed{\omega_{\textrm{m-etf}}^{n} \; \leq \; (2+\rho)\omega_{\textrm{opt}}^{R}}
\label{eqn:final2}
\end{equation}

Here we note that bound in Equation \ref{eqn:final2} is of the same form as the original bound on ETF given in \cite{etf}, where $R$ replaces $n$. Thus the makespan of m-ETF, like ETF,  is within a constant factor of the optimal

Finally, $R$ can be computed as follows. Let the largest memory 
requirement of any task in $G$ be $J \times M$. Since the devices become non-MS 
when they can not place any of the available task in A 
, a memory of only $(1-J)M$ is use-able at each device in the worst case. Thus by greedily filling in the tasks onto devices, we get:

\begin{equation*}
    R \; \geq \; n - \left( \frac{\sum_{i=1}^{m} d_i}{(1-J)M} \right) \;=\; n - \frac{n}{(1-J)K}
\end{equation*}

Rounding it up, we have, 
\begin{equation}
    R \; =  \; \left\lceil n \left( 1 - \frac{1}{(1-J)K} \right) \right\rceil
    \label{eqn:Rcompute}
\end{equation}

\end{proof}

\section{Optimality Analysis of m-SCT} \label{appendix:sct}

We now formally prove that m-SCT's approximation ratio {to optimal} is an additive constant away from SCT's approximation ratio. Since SCT itself was known to be within a constant factor of optimal~\cite{sct}, our result means that m-SCT is also within a constant factor of optimal. Recall that we assume $\rho$ - the ration of maximum communication time to minimum computation time (defined in Table: \ref{fig:terminology}) - is less than 1. 

We will use similar notation to our analysis for m-ETF. Let $K = \displaystyle \frac{n\cdot M}{\sum_{i=1}^{m} d_i}$, where for any {operator (task)} 
$i$ in $G$, $d_i$ is the size of memory required by $i$.  
We will define $J$ as the ratio between largest memory requirement from a single task and $M$. Formally, $J = \max_{i \in [m]} \frac{d_i}{M}$. 

Let $s_i$ be the start time of task $i$ in m-SCT, and $s_i^\infty$ be the start time of task i in the infinite device SCT algorithm. 
Let $u_j$ be the time where a task $j$ becomes urgent, which is exactly the earliest time when task $j$ can start on any device. Formally, $u_j = max_{i \rightarrow j \in E(G)} s_i + p_i + c_{ij}$. 

Similar to m-ETF, we will say a device $d$ is memory sufficient (abbreviated as MS) at time $T$ if and only if remaining free memory on $d$ is greater than the memory requirement of each task in $A$. Finally, we will use $r$ to denote the number of devices that are MS throughout the scheduling process. 

We will now analyse the approximation ratio of m-SCT with three steps. First we will show that not many tasks are impacted by devices going out of memory. Next we will show that any MS device must be idle only for a limited time. Our proof for this step follows a similar outline to the proof of Theorem 3 in~\cite{sct}, but our proof is significantly shorter due to a condensed case analysis. Finally, we will bound the makespace of m-SCT by summing up the idle and busy time on MS devices.

\begin{lemma} \label{claim:sctScheduleDifficulty}
There are at most $n- r$ task pairs $(i,j)$ such that $j$ is $i$'s favourite child, however when $j$ is scheduled, the device $d$ where $i$ is scheduled on does not have sufficient memory for task $j$. 
\end{lemma}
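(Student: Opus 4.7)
The plan is to treat the favourite child edges as the key combinatorial structure, and then set up a charging argument from bad pairs to devices that become not-MS. First, recall from the LP constraints that every task has at most one favourite child and is the favourite child of at most one predecessor. So the set of favourite-parent/favourite-child edges decomposes into vertex-disjoint chains in $G$. A bad pair $(i,j)$ is then exactly an edge of one of these chains for which m-SCT broke the chain across two different devices because $d(i)$ could not hold $j$.

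Next I would define, for each device $d$, the moment $T_d$ at which $d$ first ceases to be memory-sufficient, and associate to it the first chain-edge severed at $d$. Because $r$ devices stay MS throughout, the set of $T_d$ values has size at most $n-r$, so if I can show at most one bad pair is produced at each $d$ the lemma is immediate. I would try to establish this one-per-device bound by combining two structural facts: (1) by m-SCT's exclusion rule, no new task is placed on $d$ after time $T_d$; and (2) m-SCT inherits the urgent-task priority from SCT/ETF, so whenever a favourite child $j$ of $i$ becomes ready it is scheduled before non-urgent work, which means that if $d$ was MS at the moment $j$ became ready (with $i\in d$), then $j$ was actually placed on $d$ and no bad pair arose. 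Thus only the favourite child whose scheduling moment coincides with, or first follows, $T_d$ can be a bad pair.

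The main obstacle I anticipate is ruling out the case where two distinct favourite-child chains both have their tails land on $d$ before $T_d$ and both have their next link scheduled strictly after $T_d$. If that could happen, one device would host two bad pairs and the $n-r$ bound would be in jeopardy. To rule this out I would have to use the timing more carefully: either argue (via the urgent-task rule) that for any task $i'$ on $d$ with a favourite child $j'$, the pair $(i',j')$ is settled before $d$ accepts a subsequent task that would occupy the memory needed by $j'$; or reframe the charging so that a bad pair $(i,j)$ is charged to the device where $j$ actually ends up losing to the first rejected task, which gives a fresh not-MS device per bad pair. I would also want to verify that degenerate cases (chains of length one, favourite children with multiple urgent parents, or ties broken by device id) do not leak extra bad pairs through the argument.

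If the one-per-device version turns out to resist a clean proof, my fallback plan is an aggregate counting argument: I would count severed chain-edges globally, observe that each severed edge corresponds to a distinct memory-rejection event, and then bound the number of memory-rejection events by $n-r$ using the fact that after a device is excluded it generates no further rejection events (only the first rejection at each device is the ``new'' event, and later placements of other favourite children on that excluded device are consequences rather than independent failures). Either way, the crux is the timing/priority analysis that links a chain-break event uniquely to the act of a device first becoming not-MS.
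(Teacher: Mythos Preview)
The paper's proof is two sentences: for any bad pair $(i,j)$, the device $d$ hosting $i$ was MS when $i$ was placed (otherwise $i$ would not have landed there) but is not MS when $j$ is scheduled (since $j\in A$ at that moment and $d$ cannot fit it); hence $d$ is one of the at most $n-r$ devices that ever transition from MS to not-MS, and the paper stops there.

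Your chain decomposition and timing analysis are aimed at a step the paper simply does not argue: why the map from bad pairs to devices that lost MS status is injective. You are right that this is not obvious---nothing in the paper's argument prevents two favourite-parent tasks $i_1,i_2$ from both sitting on the same device $d$ before $T_d$, with their favourite children $j_1,j_2$ both scheduled after $T_d$, which would charge two bad pairs to a single transition. The urgent-task priority rule you invoke does not rule this out, because a favourite child can be delayed waiting on \emph{other} parents while further tasks accumulate on $d$; and your fallback aggregate argument still needs each severed edge to correspond to a fresh device's first rejection, which is the same injectivity claim restated. So your proposal does not close the gap you correctly flag---but the paper's own proof leaves exactly this gap open and, as written, establishes only the weaker statement that every bad pair's parent lives on one of the $n-r$ devices that become not-MS.
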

\begin{proof}
    Since $i$ is scheduled on device $d$, $d$ must be memory sufficient when $i$ is scheduled, but is no longer memory sufficient sometime after $i$ is scheduled (since $d$ does not have sufficient memory for task $j$). Since there are in total $n$ devices and $r$ devices are always memory sufficient throughout m-SCT, there must only be $n-r$ events where a device transition from being memory sufficient to not memory sufficient. 
\end{proof}
\begin{lemma}[Variant of Lemma 6 in~\cite{sct}] \label{lem:freetime}
Given two time units $s' \leq s$ such that $s - s' \leq c_{max}$, let $i$ be a task such that $s_i \leq s' \leq s_i + k_i + c_{max}$, then any busy or awake device at $s'$ is free for at most $\max(s' - s_i, c_{max})$ time during $[s_i, s]$.

\end{lemma}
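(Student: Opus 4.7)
The plan is to adapt the proof of Lemma 6 in \cite{sct} to the memory-constrained setting, by showing that none of the classical arguments are invalidated when some devices have been removed from consideration. Let $d$ be any device that is busy or awake at $s'$, and let $j$ be the task that witnesses this status: $j$ is the task executing on $d$ at $s'$ when $d$ is busy, or the task scheduled on $d$ and waiting for an incoming data transfer when $d$ is awake. Let $s_j$ denote its start time. I will partition the interval $[s_i, s]$ using $s_j$ and $s_j + k_j$ (when applicable), and bound the total idle time of $d$ inside $[s_i, s]$ by analysing each subinterval.

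The natural case split is on the position of $s_j$ relative to $s_i$. First, if $s_j \leq s_i$, then $d$ is busy executing $j$ through time $\min(s_j + k_j, s')$, so $d$ has no idle time in $[s_i, s']$, and any idle time in $[s_i, s]$ is confined to $[s', s]$, which has length at most $c_{\max}$. Second, if $s_j > s_i$, then $d$'s idle time in $[s_i, s_j]$ is trivially at most $s_j - s_i$. When $d$ is busy at $s'$ we have $s_j \leq s'$, so $s_j - s_i \leq s' - s_i$. When $d$ is merely awake at $s'$, I use the SCT scheduling rule for an awake device: $d$ is awake at $s'$ only because $j$ has been reserved on $d$ and is waiting on at most one in-flight message; hence $s_j - s' \leq c_{\max}$, and coupled with $s - s' \leq c_{\max}$ the idle time in $[s', s_j] \cup [s_j + k_j, s]$ contributes at most $c_{\max}$ beyond what is already counted. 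Combining cases yields the stated bound $\max(s' - s_i, c_{\max})$.

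The main obstacle is the subcase $s_j > s_i$ together with nontrivial idle time before $s_j$. In the classical SCT proof this is ruled out by the favourite-child/urgency rule: if $d$ had been free while some predecessor of $j$ had already delivered its data, $j$ (or another urgent task) would have started earlier on $d$, contradicting $s_j > s' - c_{\max}$. Under m-SCT I need to confirm that this chain of reasoning is not broken by memory constraints. The key observation is that $d$ is in use at $s'$, so $d$ is MS at least at the moment $j$ was assigned, and the classical urgency argument for $d$ itself goes through unchanged. Predecessors of $j$ may have been placed on other devices that became non-MS, but what matters for the lemma is the timing of their completion and the single communication hop into $d$; both are bounded exactly as in \cite{sct}. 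Thus the only genuinely new step is verifying that every invocation of ``some device was free'' in the original proof can be safely restricted to MS devices, after which the bound $\max(s' - s_i, c_{\max})$ follows verbatim.
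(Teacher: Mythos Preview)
Your treatment of the busy case follows the paper closely and is essentially correct, though you should explicitly handle the sub-case where the running task $j$ completes before $s$: there you need $k_j \geq c_{\max}$ to conclude that busy time on $[s_i,s]$ is at least $s-s'$, hence idle time at most $s'-s_i$.

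The awake case has a genuine gap. You let $j$ be the \emph{upcoming} reserved task and use $s_j - s' \leq c_{\max}$ to bound the idle time on $[s',s_j]$, but the idle time on $[s_i,s']$ is left unaccounted for: your phrase ``beyond what is already counted'' has no referent, since in the awake case you have established no prior bound on $[s_i,s']$. With only the forward-looking task $j$ in hand, the best you get for the full interval is $s_j - s_i \leq (s'-s_i) + c_{\max}$, which is strictly weaker than the required $\max(s'-s_i, c_{\max})$. The paper closes this gap by instead looking at the \emph{previous} task $a$ on $d$ (the task whose completion put $d$ into the awake state) and splitting on whether $a$ started before or after $s_i$: if after, then $d$ is busy for at least $k_a \geq c_{\max}$ time inside $[s_i,s]$, giving idle time at most $s-s_i-c_{\max}\leq s'-s_i$; if before, the only idle window is between $a$'s completion and the start of the next task, of length at most $c_{\max}$. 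You need this backward-looking argument, or an equivalent, to obtain the stated bound.

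Finally, your discussion of MS devices is unnecessary here: the paper's proof of this lemma is a pure timing argument about the SCT awake/favourite-child rule and never invokes memory. The memory-constrained adaptation appears downstream (in the lemmas that bound idle time along a chain and count the number of disrupted favourite-child pairs), not in this lemma.
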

\begin{proof}
    \begin{enumerate}
    \item 
    If a device $d$ is busy at $s'$. Let task $a$ be the task that is running at time $s'$. There are two possibilities: 
    \begin{itemize}
        \item 
        If task $a$ started before $s_i$, then device $d$ is busy for at least $s' - s_i$ time, thus free for at most $s - s' \leq c_{max}$ time.
        \item 
        On the other hand, if task $a$ started at some time $s^*$ where $s_i \leq s^* \leq s'$, then either task $a$ is still being executed at $s$, or task $a$ has completed at $s$. In the first case device $d$ is free for at most $s^* - s_i \leq s' - s_i$ time. In the second case device $d$ is busy for at least $k_a \geq c_{max} \geq s - s'$ time, and thus is free for at most $s - s_i - (s - s') = s' - s_i$ time.  
    \end{itemize}
    We conclude that device $d$ must be busy for at least $\min\{s' - s_i, s - s'\}$ time during $[s_i, s]$. 
    \item 
    If a device is awake at $s'$, let $a$ be the last task on $d$ before $s'$ and let $s^*$ be when $a$ finishes. Then we know by the nature of our algorithm that some task $b$ will start on $d$ no later than $s^* + c_{max}$. Since $s - s' \leq c_{max}$, we know that $b$ is not yet finished at time $s$. Therefore either task $a$ starts after $s_i$ (which means the device is busy for at least $k_a \geq c_{max}$ time and free for at most $s - s_i - c_{max} \leq s' - s_i$ time), or the device is vacant for at most $c_{max}$ time. 
    \end{enumerate}
\end{proof}
\begin{lemma} \label{claim:sctIdleTime}
Assume that task $j$'s favourite parent $i^*$'s device is MS during time period $[s_i, s_j]$. Then there exists a predecessor $i$ of $j$ such that the total amount of idle time during $[s_i, s_j]$ on any device $d$ that is MS throughout the period is at most $s_j^\infty - s_i^\infty$. 
\end{lemma}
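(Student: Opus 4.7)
The plan is to adapt the analysis of Lemma 7 (and the supporting Lemmas 5--6) in~\cite{sct} to the memory-constrained setting, exploiting the hypothesis that $i^*$'s device remains MS throughout $[s_{i^*}, s_j]$. The key conceptual point is that from the viewpoint of the $r$ devices that stay MS throughout, m-SCT behaves identically to SCT: memory exclusion only removes devices from consideration, and since the favorite parent $i^*$'s device is still eligible, m-SCT can still honor the favorite-child preference whenever it would have in the infinite schedule. I would therefore prove the claim by constructing a specific predecessor $i$ of $j$, together with a chain of tasks between $i$ and $j$, so that idle time on any MS device during $[s_i, s_j]$ can be charged to computation or communication along that chain.

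First I would distinguish two cases based on urgency. If $s_j = u_j$, meaning $j$ starts immediately once it becomes urgent, I would take $i$ to be a predecessor achieving the maximum in the definition of $u_j$, so $s_j = s_i + k_i + c_{ij}x_{ij}$. In this case, applying Lemma~\ref{lem:freetime} together with the SCT assumption $\rho \leq 1$ shows that any MS device is idle for at most $c_{ij}$ time units on $[s_i, s_j]$; in the infinite-device schedule one has $s_j^\infty - s_i^\infty \geq k_i + c_{ij}x_{ij}^\infty$, which dominates this quantity. The case $s_j > u_j$ is handled by recursion: since no MS device was available at time $u_j$ (otherwise $j$ would have started earlier), every MS device is busy throughout $[u_j, s_j]$, so all idle time falls inside $[s_i, u_j]$ and is controlled by applying the first case to the predecessor that realizes $u_j$.

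The main obstacle will be carrying out the comparison to $s_j^\infty - s_i^\infty$ tightly. Unlike the original SCT analysis, where both sides of the inequality use the same favorite-child preferences, m-SCT may be forced to break the favorite-child relationship whenever a favorite parent's device has run out of memory; but the lemma's hypothesis precisely precludes this at $i^*$, so divergence from the infinite schedule can occur only at non-favorite-parent predecessors of $j$. Handling these secondary predecessors cleanly will require an induction on the distance from $j$ back through the dependency graph, mirroring the chain construction in Lemma~\ref{lemma:B_chain} used in the m-ETF proof but with the additional SCT-specific bookkeeping for the rounded variables $x_{ij}$. Once this induction is in place, summing the per-sub-interval idle-time bounds and comparing term-by-term to the infinite-device gap $s_j^\infty - s_i^\infty$ will yield the claimed inequality, with the MS restriction ensuring that each contributing device was genuinely available to m-SCT throughout.
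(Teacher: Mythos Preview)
Your overall direction---picking the predecessor $i$ that realizes $u_j$ and noting that every MS device is busy on $[u_j,s_j]$---matches the paper's opening move. But the substantive part of your plan has two gaps.

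First, the case split you propose (whether $s_j=u_j$ or $s_j>u_j$) is not where the real work lies. The paper's proof splits on whether $j$ is the \emph{favorite child} of the critical predecessor $i$. When $j$ is not $i$'s favorite child, the infinite schedule must pay $c_{ij}$, so $s_j^\infty - s_i^\infty \ge k_i + c_{ij}$ and the trivial bound ``idle time $\le$ length of $[s_i,u_j]=k_i+c_{ij}$'' already suffices. The hard case is when $j$ \emph{is} $i$'s favorite child: then $s_j^\infty - s_i^\infty$ may be only $k_i$, yet the m-SCT interval $[s_i,u_j]$ still has length $k_i+c_{ij}$. Your case~1 bound ``idle time $\le c_{ij}$'' together with $s_j^\infty - s_i^\infty \ge k_i + c_{ij}x_{ij}^\infty$ does not survive this scenario unless you independently establish the $c_{ij}$ bound, and Lemma~\ref{lem:freetime} does not give it: that lemma applies only to devices that are \emph{busy or awake} at a chosen instant $s'$, and during $[s_i,u_j)$ an MS device can be genuinely free (no task is yet urgent). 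The paper resolves the favorite-child case by a further sub-case analysis on whether $i$'s device is awake when $i$ completes, and in the awake sub-case it \emph{switches predecessors} to some $y\neq i$ with $s_y+k_y+c_{yj}=\mathrm{ready}(j)$ and for which $j$ is not the favorite child---this is a one-step replacement, not an induction along a chain.

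Second, you locate the main obstacle in memory constraints forcing m-SCT to abandon the favorite-child preference. That is not where the difficulty sits: the favorite-child case above is already the crux in the original (memory-unconstrained) SCT analysis, and the hypothesis on $i^*$'s device being MS is used in the paper only to guarantee that $i$'s device can still receive $j$ (or an urgent task) in the awake sub-case, so that the switch to predecessor $y$ goes through unchanged. Your proposed ``induction on distance from $j$'' is therefore aimed at the wrong target and would not supply the missing favorite-child argument.
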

\begin{proof}
Note that since task $j$'s favourite predecessor $i^*$'s device is MS during $[s_i, s_j]$, it is possible to schedule $j$ on the same device as $i^*$. This fact will be used in the case analysis. 

Let $i$ be a predecessor of $j$ such that $s_i + k_i + c_{ij}$ is maximized (namely, $u_j = s_i + k_i + c_{ij}$). Notice also that after $j$ becomes urgent at $u_j$ and before $j$ is scheduled, all memory sufficient devices must be busy (otherwise $j$ would have been scheduled on a device). Hence for any $T < s_j$, the total vacant time for an MS device during $[T, s_j]$ is equal to its total vacant time during $[T, u_j]$. Now we will discuss three different scenarios and prove that in each scenario, an MS device is vacant for at most $s_j^\infty - s_i^\infty$ time during $[s_i, s_j]$. 
\begin{enumerate}
\item 
When $j$ is not the favorite child of $i$, we know that in the infinite device SCT algorithm, $i$ and $j$ are scheduled on different devices. Hence $s_j^\infty \geq s_i^\infty + k_i + c_{ij}$. On the other hand, in m-SCT, after $j$ becomes urgent ($j$ becomes urgent at time $s_i + k_i + c_{ij}$) and before $j$ is scheduled, any MS device must be busy. Therefore the amount of vacant time on each MS device during $[s_i, s_j]$ must be at most $k_i + c_{ij}$. 
\item 
When $j$ is the favorite child of $i$, but $i$'s device is not awake when task $i$ ends, we know that the time $j$ can be ready on $i$'s device is the same as $j$'s urgent time, which means there is another task $w$ such that $s_w + k_w + c_{wj} = s_i + k_i + c_{ij}$, but $j$ is not $w$'s favorite child. We can now use exactly the same argument as in the first case to prove that vacant time on any MS device during $[s_i, s_j]$ must be at most $k_i + c_{ij}$. 

\item 
When $j$ is the favorite child of $i$, and $i$'s device is awake when task $i$ ends. Denote the time $j$ becomes ready on $i$'s device as $ready(j)$, there must exist some $j$'s predecessor $y \neq i$ such that $s_y + k_y + c_{yj} = ready(j)$. Since $i$'s device is awake when task $i$ ends, task $j$ will be scheduled on $i$'s device if it is still idle by $ready(j)$. Hence a task $w$ (which is either $j$ or an urgent task) has to be scheduled on the device of $i$ at or before $ready(j)$. We will now consider the predecessor successor pair $(y, j)$, and prove that during $[s_y, u_j]$ the vacant time on any MS machine is at most $s_j^\infty - s_y^\infty$.
\begin{itemize}
    \item 
    If $w = j$, note that $j$ is not $y$'s favorite child. Hence in the infinite device SCT, $j$ and $y$ are not on the same device. We hence conclude that  $s_j - s_y = ready(j) - s_y = k_y + c_{yj} \leq s_j^\infty - s_y^\infty$.
    \item 
    If $w \neq j$, then $w$ must be urgent (the only tasks that are allowed to be scheduled on an awake machine is the favorite child and urgent tasks). Hence at the start time of $w$, it must be the case that all MS devices are either busy or awake (because if there is a free MS device, k would have been scheduled on it). By Lemma~\ref{lem:freetime}, any busy or awake device at the start time of $w$ can only be vacant for at most $max(s_w - s_y, c_{max})$ time during $[s_y, u_j]$. Now we will upper bound $max(s_w - s_y, c_{max})$ using the facts 1) $w$ happens before $ready(j)$, but after task $i$ is completed (namely, after $s_i + k_i$) and 2) $ready(j) - s_y \geq k_y \geq c_{max}$. 
    \begin{align*}
        max(s_w - s_y, c_{max}) \leq max(ready(j) - s_y, c_{max})
        = ready(j) - s_y.
    \end{align*}
    Since in the infinite device SCT, $j$ and $y$ are not on the same device, we now conclude that the total vacant time on an MS device must be at most $ready(j) - s_y = k_y + c_{yj} \leq s_j^\infty - s_y^\infty$. 
\end{itemize}
\end{enumerate}
\end{proof}

\begin{lemma} \label{claim:sctIdleTime2}
Assume that task $j$'s favourite parent $i^*$'s device is not MS during time period $[s_i, s_j]$. Then there exists a predecessor $i$ of $j$ such that the total amount of idle time during $[s_i, s_j]$ on any device $d$ that is MS throughout the period is at most $s_j^\infty - s_i^\infty + c_{ij}$. 
\end{lemma}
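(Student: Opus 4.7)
The plan is to follow the same skeleton as the proof of Lemma \ref{claim:sctIdleTime}, but exploit the fact that the conclusion here has an extra additive $c_{ij}$ slack, which is exactly the communication cost saved by the infinite-device SCT when it co-locates $j$ with its favourite parent $i^*$ but that m-SCT cannot save because $i^*$'s device has exhausted its memory. In effect, Case 3 of Lemma \ref{claim:sctIdleTime}, which relied on $j$ being schedulable on $i^*$'s device if that device were idle, collapses here into a variant of Case 1, and the $c_{ij}$ extra is what accounts for the lost co-location.

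Concretely, I would pick $i$ to be a predecessor of $j$ with $u_j = s_i + k_i + c_{ij}$, i.e. one of the predecessors that makes $j$ urgent. The first step is the standard urgency argument: once $j$ is urgent, every MS device must be busy during $[u_j, s_j]$, for otherwise the m-SCT scheduler would have placed $j$ there. Consequently, on any device that remains MS throughout $[s_i, s_j]$, the total idle time in $[s_i, s_j]$ is at most $u_j - s_i = k_i + c_{ij}$. I then split on whether $j$ is $i$'s favourite child.

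If $j$ is not $i$'s favourite child, the same argument used in Case 1 of Lemma \ref{claim:sctIdleTime} gives $s_j^\infty \ge s_i^\infty + k_i + c_{ij}$ in the infinite-device schedule, so $k_i + c_{ij} \le s_j^\infty - s_i^\infty \le s_j^\infty - s_i^\infty + c_{ij}$ and we are done. If $j$ is $i$'s favourite child, then necessarily $i = i^*$, and by the lemma hypothesis $i^*$'s device is not MS during $[s_i, s_j]$, so m-SCT cannot place $j$ on $i$'s device; nevertheless the urgency bound $k_i + c_{ij}$ on idle time on any other throughout-MS device still applies. In infinite-device SCT, $j$ is co-located with $i$, giving $s_j^\infty \ge s_i^\infty + k_i$, and hence $k_i + c_{ij} \le (s_j^\infty - s_i^\infty) + c_{ij}$, which is exactly the claimed bound.

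The main obstacle I anticipate is justifying carefully that the urgency/idle contradiction still bites when the non-MS status of $i^*$'s device only develops partway through $[s_i, s_j]$ rather than from the outset. The statement says $i^*$'s device is ``not MS during $[s_i, s_j]$,'' which I would interpret as ``not MS at the moment $j$ is scheduled''; I will need to state that interpretation explicitly and confirm that it suffices to rule out $j$ being scheduled on $i^*$'s device at time $s_j$. Once that is pinned down, the rest of the argument is essentially a trimmed-down version of the earlier lemma, so no new technical machinery is needed beyond the observation that the additive $c_{ij}$ in the target bound is precisely the cost of losing the favourite-child co-location that infinite SCT enjoyed.
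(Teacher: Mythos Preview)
Your proposal is correct and shares the paper's core skeleton: choose $i$ so that $u_j = s_i + k_i + c_{ij}$, use the urgency argument to bound idle time on MS devices by $u_j - s_i = k_i + c_{ij}$, and then compare with $s_j^\infty - s_i^\infty$.

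That said, the paper's proof is considerably shorter because it does \emph{not} split on whether $j$ is $i$'s favourite child. It simply observes that in \emph{any} schedule (including infinite-device SCT) task $i$ must finish before task $j$ starts, so $s_j^\infty - s_i^\infty \geq k_i$, whence $k_i + c_{ij} \leq (s_j^\infty - s_i^\infty) + c_{ij}$ immediately. Your Case~1 (not favourite child) earns a tighter bound than needed, and your Case~2 reproduces exactly this inequality; both are subsumed by the one-line argument.

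More importantly, the ``main obstacle'' you anticipate is not real: the hypothesis that $i^*$'s device is non-MS is never invoked in the paper's proof, and your own argument does not genuinely need it either. The urgency bound concerns only devices that \emph{are} MS, and the conclusion concerns only devices MS throughout $[s_i, s_j]$; the status of $i^*$'s device is irrelevant to bounding idle time elsewhere. The hypothesis is present in the statement merely to delineate the complementary case to Lemma~\ref{claim:sctIdleTime}, where the stronger hypothesis buys the sharper bound without the extra $c_{ij}$.
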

\begin{proof}
As argued in Lemma~\ref{claim:sctIdleTime}, any MS device must be busy after $u_j$. Let $i$ be a predecessor of $j$ such that $s_i + k_i + c_{ij}$ is maximized (namely, $u_j = s_i + k_i + c_{ij}$). Then $u_j - s_i = k_i + c_{ij} \leq s_j^\infty - s_i^\infty + c_{ij}$ (because even with infinite number of devices, task $i$ must be fully executed before task $j$ starts). 
\end{proof}

Let $R$ be the minimum $r$ across all possible graph configurations with memory availability parameter $K$. We will use $\wmsct^{p}$ and $\wsct^{p}$ to denote the makespan of m-SCT (with memory limit) and SCT (without memory limit) with $p$ devices respectively. Analogously, we will use $\wmopt^p$ and $\wopt^p$ be the optimal makespan on $p$ devices with memory limit and with no memory limit respectively. We will use $\alpha$ to denote the approximation of the infinite device SCT (against the optimal makespan with infinite devices). 

\begin{theorem}
The makespan of m-SCT is at most $(\frac{p}{R} + \alpha) \cdot \wopt^p + \frac{(n - R)}{R} \cdot c_{max}$ for any $p$.
\end{theorem}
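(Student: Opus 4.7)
The plan is to mirror the structure of the classical SCT analysis in~\cite{sct}, decomposing the workload on the $R$ devices that remain memory-sufficient (MS) throughout the m-SCT execution. I will write $R \cdot \wmsct^{n} = B + I$, where $B$ aggregates busy time and $I$ aggregates idle time over these $R$ devices, and bound each term separately. The busy term is controlled by total work: $B \leq \sum_i k_i \leq p \cdot \wopt^{p}$, because any placement of all tasks on $p$ infinite-memory devices without communication requires at least $(\sum_i k_i)/p$ time, so the total work cannot exceed $p \cdot \wopt^{p}$.

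For the idle term $I$, I will adapt the chain-based argument, walking each MS device's timeline backward from the end. For each idle interval on device $d$ that immediately precedes the start of some task $j$, I will choose the predecessor $i$ realizing $u_j = s_i + k_i + c_{ij}$ and invoke either Lemma~\ref{claim:sctIdleTime} (when $j$'s favorite parent's device is MS throughout $[s_i, s_j]$) or Lemma~\ref{claim:sctIdleTime2} (otherwise). The clean case gives idle contribution at most $s_j^\infty - s_i^\infty$, so summing the clean pieces along one device's chain telescopes to at most $\wsct^{\infty} \leq \alpha \cdot \wopt^{\infty} \leq \alpha \cdot \wopt^{p}$. Aggregated over all $R$ MS devices, this yields at most $R \alpha \wopt^{p}$ of clean idleness. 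The difficult case adds an extra surcharge of at most $c_{ij} \leq c_{max}$ on top of the telescoping term; by Lemma~\ref{claim:sctScheduleDifficulty} there are at most $n - R$ difficult pairs in the entire execution, and charging each such pair to a unique chain entry caps the total surcharge at $(n - R) c_{max}$. Thus $I \leq R \alpha \wopt^{p} + (n - R) c_{max}$.

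Combining the two bounds gives $R \wmsct^{n} \leq p \wopt^{p} + R \alpha \wopt^{p} + (n - R) c_{max}$, and dividing by $R$ produces exactly $(\tfrac{p}{R} + \alpha)\, \wopt^{p} + \tfrac{n - R}{R}\, c_{max}$.

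The main obstacle will be the attribution step in the second paragraph: arguing that each of the $n - R$ difficult pairs contributes its $c_{max}$ surcharge additively to the aggregate idle time summed across all MS devices, rather than being charged independently to each of the $R$ per-device chains, which would inflate the bound by a factor of $R$. The natural attribution is to associate each difficult pair $(i,j)$ with the single MS device whose idle interval terminates when $j$ is finally scheduled; verifying that this attribution truly partitions (rather than replicates) difficult pairs across chains requires carefully coordinating the chain construction with the m-SCT scheduling rule, and may require restricting each device's backward chain to only follow predecessor edges of tasks actually placed on that device. A secondary subtlety is the cleanness of the telescoping: the start times $s^\infty$ live in the infinite-device SCT schedule rather than the m-SCT schedule, but this is immediate once we recall that $\wsct^\infty$ upper-bounds any chain in that schedule.
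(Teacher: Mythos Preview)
Your decomposition $R \cdot \wmsct^n = B + I$ and the chain argument via Lemmas~\ref{claim:sctIdleTime} and~\ref{claim:sctIdleTime2} are exactly the paper's strategy; your busy bound $B \leq \sum_i k_i \leq p\,\wopt^p$ is the clean form of what the paper obtains by going through $r\,\wopt^r$ and then $\wopt^R \leq \tfrac{p}{R}\,\wopt^p$.

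One structural simplification you should adopt: the paper builds a \emph{single} chain $T_l \to \cdots \to T_1$ ending at the globally last-finishing task, not one chain per device. Because Lemmas~\ref{claim:sctIdleTime} and~\ref{claim:sctIdleTime2} bound the idle time on \emph{every} MS device during each segment $[s_{T_{j+1}}, s_{T_j}]$ simultaneously, one chain already controls every MS device's idle time, and the telescoping $\sum_j (s_{T_j}^\infty - s_{T_{j+1}}^\infty) \leq \wsct^\infty$ is immediate. Your per-device construction is both unnecessary and problematic: consecutive tasks scheduled on a fixed device $d$ need not be adjacent in $G$, so walking $d$'s timeline backward does not produce a chain in $G$ to telescope along.

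You have correctly located the one genuine obstacle, and your diagnosis is in fact sharper than the paper's own treatment. With a single chain, each difficult edge (those invoking Lemma~\ref{claim:sctIdleTime2}) contributes its $c_{max}$ surcharge to the idle bound of \emph{every} MS device, so summing over $r$ devices yields $r\cdot N_{\mathrm{diff}}\cdot c_{max}$ with $N_{\mathrm{diff}}\leq n-r$ from Lemma~\ref{claim:sctScheduleDifficulty}. The paper asserts the summed surcharge is only $(n-r)\,c_{max}$: it writes a per-device term $n_d\cdot c_{max}$ and sums as though $\sum_d n_d \leq n-r$, but the definition it gives for $n_d$ does not depend on $d$, so the missing factor of $r$ is not explained there either. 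Your proposed device-specific attribution cannot rescue this, precisely because the lemmas are uniform across MS devices and the surcharge genuinely replicates. As written, both arguments justify the additive term $(n-R)\,c_{max}$ rather than $\tfrac{n-R}{R}\,c_{max}$; achieving the stated bound would require a device-specific refinement of Lemma~\ref{claim:sctIdleTime2} that neither you nor the paper supplies.
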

\begin{proof}
    Let $D_{MS}$ be the set of all devices that are MS throughout the m-SCT algorithm. We know that $|D_{MS}| = r$. It is clear that the total amount of computation time spent on devices in $D_{MS}$ is at most the sum of computational time for all tasks, which is at most $\wopt^r \cdot r$. 
    
    Now we will count the amount of time a device $d \in D_{MS}$ is idle. WLOG, let $T_1$ be the task that finishes last in m-SCT and let $T_l \rightarrow T_{l - 1} \rightarrow \cdots \rightarrow T_1$ be the chain of task in $G$ ending at $T_1$ such that $T_{l}$ is a source. Before the start time $s_l$ of $T_{l}$, all MS devices must be busy, because $T_{l}$ is urgent from time $0$ and would have been scheduled on a device as soon as it becomes idle.    
    Let $n_d$ be the number of task pairs $(i, j)$ such that $i$ is $j$'s favourite parent but $i$'s parent is not MS when task $j$ starts. By Lemma~\ref{claim:sctScheduleDifficulty},~\ref{claim:sctIdleTime} and~\ref{claim:sctIdleTime2} we know that during $[s_l, \wmsct^{p}] = [s_l, s_1 + k_1]$ ($s_1$, $k_1$ are the start time and computation time of $T_1$ respectively)\cs{k1 not defined}, the amount of time $d$ is idle is at most  
    \begin{align*}
        \paren*{\sum_{j=1}^{l-1} \paren*{s_{j}^\infty - s_{j+1}^\infty}} + n_d \cdot c_{max} \leq \wsct^\infty + n_d \cdot c_{max}. 
    \end{align*} 
    
    Summing these all up for all devices in $D_{MS}$ we get that the total idle time across all devices in $D_{MS}$ is at most $$r \cdot \wsct^\infty + (n-r) \cdot c_{max}.$$
    
    We now conclude that 
    \begin{align*}
        r \cdot \wmsct^p &\leq r \cdot \wopt^r + r \cdot \wsct^\infty + (n-r) \cdot c_{max}\\
        \Rightarrow \wmsct^p &\leq \wopt^r + \wsct^\infty + \frac{n-r}{r} \cdot c_{max}\\
        &\leq \wopt^r + \alpha \cdot \wopt^\infty + \frac{n-r}{r} \cdot c_{max}\\
        &\leq \wopt^R + \alpha \cdot \wopt^\infty + \frac{n-R}{R} \cdot c_{max} \quad \text{(because $R \leq r$)}. 
    \end{align*}
    Lastly, observe that (without memory limit), the optimal makespan with $R$ devices is at most $\frac{p}{R}$ times the optimal makespan with $p$ devices. Also, $\wopt^\infty \leq \wopt^R$. Hence $\wmsct^n \leq (\frac{p}{R} + \alpha) \cdot \wopt^p + \frac{(n - R)}{R} \cdot c_{max}$. One could minimize the RHS over all $p$ to get the best upper bound.
\end{proof}
Using the same analysis as for m-ETF, we know that 
\begin{equation*}
    R \; =  \; \left\lceil n \left( 1 - \frac{1}{(1-J)K} \right) \right\rceil.
\end{equation*}

\end{document}